\newcommand{\R}{{\mathbb R}}   
\newcommand{\Proba}{{\mathbb P}}
\newcommand{\M}{\text{\sc max}}
\newcommand{\m}{\text{\sc min}}
\newcommand{\ran}{\text{\sc ran}}
\newcommand{\opt}{\mathrm{opt}}
\newcommand{\VM}{V_\text{\sc max}}
\newcommand{\Vm}{V_\text{\sc min}}
\newcommand{\Vr}{V_\text{\sc ran}}
\newcommand{\Vs}{V_\text{\sc sink}}
\newcommand{\frc}{\text{\sc For}}
\newtheoremstyle{break}
{\topsep}{\topsep}%
{\itshape}{}%
{\bfseries}{}%
{\newline}{}%
\theoremstyle{break}
\newcommand{\pros}[3]{{\mathbb P}^{#1}_{#2}\left(#3\right)}
\newcommand{\espsb}[3]{{\mathbb E}^{#1}_{#2}\left[#3\right]}  
\newcommand{\val}{\mbox{Val}}
\newcommand{\free}{\mbox{\it free}}
\newcommand{\pretotal}{{\cal P}}
\newcommand{\total}{{\cal T}}
\title{Solving Simple Stochastic Games with few Random Nodes faster using Bland's Rule}
\titlerunning{Solving SSG with few Random Nodes faster using Bland's Rule}
\author{David Auger}{DAVID laboratory, University of Versailles Saint-Quentin-en-Yvelines, France}{david.auger@uvsq.fr}{}{}
\author{Pierre Coucheney}{DAVID laboratory, University of Versailles Saint-Quentin-en-Yvelines, France}{pierre.coucheney@uvsq.fr}{}{}
\author{Yann Strozecki}{DAVID laboratory, University of Versailles Saint-Quentin-en-Yvelines, France}{yann.strozecki@uvsq.fr}{}{}
\authorrunning{D. Auger and P. Coucheney and Y. Strozecki}
\keywords{simple stochastic games, randomized algorithm, parametrized complexity, strategy improvement, Bland's rule}
\begin{document}

\maketitle

\begin{abstract}The best algorithm so far for solving Simple Stochastic Games is
  Ludwig's randomized algorithm~\cite{ludwig1995subexponential} which works in
  expected $2^{O(\sqrt{n})}$ time. We first give a simpler iterative variant of
  this algorithm, using Bland's rule from the simplex algorithm, which uses exponentially less random bits than Ludwig's version.
Then, we show how to adapt this method to the algorithm of Gimbert and
Horn~\cite{gimbert2008simple} whose worst case complexity is $O(k!)$, where $k$ is the
number of random nodes. Our algorithm has an expected running time of
$2^{O(k)}$, and works for general random nodes with arbitrary outdegree and probability distribution on outgoing arcs.
\end{abstract}

\newpage

\section{Introduction}

A \emph{simple stochastic game}, SSG for short, is a two-player zero-sum game, a
turn-based version of \emph{stochastic games} introduced by Shapley \cite{shapley1953stochastic}. SSGs were introduced by Condon \cite{condon1992complexity} and provide a simple framework that allows to study  algorithmic complexity issues underlying reachability objectives. An SSG is played by moving a pebble on a graph. Some nodes are divided between players \m \, and \M: 
if the pebble reaches a node controlled by a player then she has to move the pebble along
an arc leading to another node. Some other nodes are ruled by chance, the pebble following
one outgoing arc according to some given probability distribution. Finally, there are sink nodes 
with a rational value, which is the gain that \M-player achieves when the pebble reaches this sink. 

Player \M 's objective is, given a starting node for the pebble, to maximize the expectation of her gain
against any strategy of \m. One can show that it is enough to consider stationary deterministic strategies for both players \cite{condon1992complexity}. 
Though seemingly simple since the number of stationary deterministic strategies
is finite, the task of finding a pair of optimal strategies, or equivalently, of
computing the so-called \emph{optimal values} of nodes, is in complexity class $\mathrm{PPAD}$~\cite{etessami2010complexity} but not known to be in $\mathrm{P}$. 

Simple stochastic games are a powerful model since they can simulate many other
games such as parity games, mean or discounted payoff
games~\cite{andersson2009complexity,DBLP:journals/corr/abs-1106-1232}. However
these games are believed to be simpler than SSGs and better algorithms are known for
them; in particular, parity game is in quasi-polynomial
time~\cite{calude2017deciding}. Stochastic versions of the previous games also
exist and are
computationally equivalent to SSGs~\cite{andersson2009complexity}.
Interestingly, SSGs have many application domains, for instance autonomous urban driving~\cite{chen2013synthesis}, smart energy management~\cite{chen2013automatic}, model checking of the modal $\mu$-calculus~\cite{stirling1999bisimulation}, etc.

There are some restrictions for SSGs for which the problem of
finding optimal strategies is tractable. If the game is acyclic, it can be
solved in linear time, and in polynomial time for almost acyclic games (few
cycles or small feedback arc sets)~\cite{auger2014finding}. If there is no
randomness, the game can be solved in almost linear
time~\cite{andersson2008deterministic}. Furthermore, Gimbert and Horn were the
first to extend this result by giving Fixed Parameter
Tractable (FPT) algorithms in the number of random
nodes~\cite{gimbert2008simple}. They indeed show that optimal strategies depend only on the ordering of the values of random nodes, and not on their actual values.
Using this idea, they devise two algorithms. The first one exhaustively
enumerates these orders until it finds  one that 
actually corresponds to optimal values. The second one is a strategy
improvement algorithm based on an iterative refinement of the orders. Both have
a complexity of $k!n^{O(1)}$, where $k$ is the number of random nodes. It has
been improved to $\sqrt{k!}n^{O(1)}$ expected time in~\cite{dai2009new}, by
randomly selecting a good strategy as a starting point for a strategy improvement algorithm.
In fact, as remarked in~\cite{chatterjee2009termination}, the distance between the values of two consecutive strategies in any strategy improvement algorithm depends on the number of random nodes.
Hence any SSG can be solved in time $4^k n^{O(1)}$ (in fact $\sqrt{6}^kn^{O(1)}$ using Lemma $1.1$ in \cite{auger2014finding}). The complexity has been further improved to $2^k n^{O(1)}$ in~\cite{ibsen2012solving}, by using a value iteration algorithm. 
Here a bit of caution is in order; in some papers, random nodes can have an
arbitrary outdegree and probability distribution on outgoing arcs, and in some
other they must be binary with uniform distribution. In the former case, if we
denote by $p$ the bit-size of the largest probability distribution on a random node, the first two cited algorithms have a complexity of $p\cdot k!$ and $p \cdot \sqrt{k!}$. On the other hand, the two algorithms with an exponential complexity in $k$ have an exponential dependency on $p$ when adapted to this context.

Without the previous restrictions, only algorithms running in exponential time are known.
Most of them are strategy improvement algorithms, which produce a sequence of
strategies of increasing values. These algorithms, such as the classical
Hoffman-Karp~\cite{hoffman1966nonterminating} algorithm, rely on the switch
operation, which by a local best-response, produces a strategy with better
value. Several ways of choosing the nodes that are switched have been
proposed~\cite{tripathi2011strategy}, which can be compared to the rules
for pivot selection for the simplex algorithm in linear programming. Though efficient in
practice, these algorithms fail to run in polynomial time on a well designed input~\cite{friedmann2009exponential}.  
The best algorithm so far, proposed by Ludwig~\cite{ludwig1995subexponential,halman2007simple}, is also a strategy iteration algorithm using a randomized version of Bland's rule \cite{bland1977new} to choose a switch. It solves any SSG in expected time $2^{O(\sqrt{n})}$. The first analysis of this kind of algorithm is due to Kalai \cite{kalai1992subexponential} and it has been slightly improved recently~\cite{hansen2015improved}.

\paragraph*{Our contributions}

In Sec.~\ref{sec:ludwig}, we present an iterative variant of Ludwig's
recursive algorithm which uses less random bits. In the rest of the
paper we adapt the idea of this algorithm to carefully enumerate orders
of random nodes in an SSG. First, in Sec.~\ref{sec:main_alg}, we
present a pivot operation yielding a strategy improvement
algorithm, which improves the one of~\cite{gimbert2008simple}. 
This pivot operation comes from a randomized dichotomy on
all orders that we explain in details in Sec.~\ref{sec:analysis}, using an auxilliary game
similar to the one of~\cite{dai2009new}.
We prove that our algorithm finds the optimal strategies in expected
time polynomial in $2^k$ and $p$, where $k$ is the number of random
nodes and $p$ is the maximum bit-length of a distribution on a random
node, answering positively a question of Ibsen-Jensen and
Miltersen~\cite{ibsen2012solving}.


\section{Definitions and classic results on simple stochastic games}

We here review definitions and results related to SSGs. We only
sketch what we need and refer to longer expositions such as
\cite{condon1992complexity, tripathi2011strategy} for more details.

\begin{definition}[SSG]
  A simple stochastic game (SSG) is defined by a directed graph $G=(V,A)$,
  where $V$ is the set of nodes and $A$ the set of arcs,
  together with a partition of $V$ in four parts $\VM$, $\Vm$,
  $\Vr$ and $\Vs$, whose elements are respectively called $\M$-nodes,
  $\m$-nodes, $\ran$-nodes (for {\it random}) and {\it sinks}.
  We require that every node $x \in V$ has
  outdegree
  at least one, while sink nodes have outdegree exactly $1$ consisting of a single loop on themselves.
  We also specify for every sink $x \in
  \Vs$ a value $\val(x)$ which is a rational number, and for every random node $x \in \Vr$
  a rational probability distribution $p(x)$ on the outneighbours
  of $x$.
\end{definition}

In the original version of Condon \cite{condon1992complexity}, all nodes
except sinks have outdegree exactly two, the probability distribution on every
$\ran$-node is $(\frac{1}2, \frac{1}2)$, and there are only two sinks, one with
value $0$ and another with value $1$. Here, we allow more than two sinks, with
general rational values, and also allow more than outdegree two for all 
non-sink nodes,
with an arbitrary probability distribution for $\ran$-nodes.
However, for Ludwig's Algorithm (see Algorithms \ref{alg:ludwig},
\ref{alg:ludwigdeterministe} and \ref{alg:ludwigrecursifpartial} in section
\ref{sec:ludwig}) we shall suppose that all $\M$-nodes have outdegree $2$ and call such games  {\it $\M$-binary}.

\paragraph*{Strategies and values}

We now define strategies, by which we mean stationary and pure strategies. This is
enough for our purpose and it  turns out to be sufficient for optimality, see~\cite{condon1992complexity}. 
Such strategies specify  the choice of a neighbour for every node of a given player. 

\begin{definition}[Strategy]
  A \emph{strategy} for player $\M$ is a map $\sigma$ from $\VM$ to $V$ such that
  $\forall x \in \VM,$ $(x,\sigma(x)) \in A.$
\end{definition}

Strategies for player $\m$ are defined analogously on $\m$-nodes and are usually denoted by $\tau$.

\begin{definition}[play]
  A \emph{play} is a sequence of nodes $x_0, x_1, x_2, \dots$ such that for all $t \geq 0$, 
  $(x_t, x_{t+1}) \in A.$
  Such a play is \emph{consistent} with strategies $\sigma$ and $\tau$, respectively for player $\M$ and player $\m$, if for all $t \geq 0$,
  $ x_t \in \VM \Rightarrow x_{t+1} = \sigma(x_t) $
  and
  $ x_t \in \Vm \Rightarrow x_{t+1} = \tau(x_t). $
\end{definition}

A couple of strategies $\sigma, \tau$ and an initial node $x_0 \in V$
define recursively a random play consistent with $\sigma, \tau$ by
setting $(i)\; x_{t+1} = \sigma(x_t)$ if $x_t \in \VM$,
$(ii)\; x_{t+1} = \tau(x_t)$ if $x_t \in \Vm$, $(iii)\; x_{t+1} = x_t$ if
$x_t \in \Vs$, and finally $(iv)\; x_{t+1}$ is one of the outneighbours
of $x_t$, randomly chosen independently of everything else according
to probability distribution $p(x)$,  if $x_t \in \Vr$.

Hence, 
this defines a probability measure $\Proba_{\sigma,\tau}^{x_0}$ on plays consistent with $\sigma, \tau$. Note that if a play contains a sink node $x_s$, then at every subsequent time the play stays in $x_s$. Such a play is said to \emph{reach} sink $x_s$. 
To every play $x_0, x_1, \dots$ we associate a value which is the value of the
sink reached by the play if any, and $0$ otherwise. If we denote by $X$ this
value, then $X$ is a random variable once two strategies and an
initial node $x$ are fixed. We are interested in the expected value of this quantity, which we call the value of a node $x \in V$ under strategies $\sigma, \tau$:  $\val_{\sigma,\tau} (x) = {\mathbb E}_{\sigma,\tau}^x \left( X \right)$ where ${\mathbb E}_{\sigma, \tau}^x$ is the expected value under probability $\Proba_{\sigma, \tau}^x$.

The goal of player $\M$ is to maximize this (expected) value, and the best he can ensure against a strategy $\tau$ is $\val_{*,\tau} (x) := \max_{\sigma} \val_{\sigma,\tau} (x)$ where the maximum is considered over all $\M$-strategies (which are in finite number).
Similarly, against $\sigma$ player \m  \ can ensure that the expected value is at most
$\val_{\sigma,*} (x) := \min_{\tau} \val_{\sigma,\tau} (x).$

Finally, the value of a node $x$ is  $\val_{*,*}(x) := \max_{\sigma} \val_{\sigma,*} (x) = \min_{\tau} \val_{*,\tau} (x).$
The fact that these two quantities are equal is nontrivial, and it can be found for instance in \cite{condon1992complexity}. A pair of strategies $\sigma^*, \tau^*$ such that, for all nodes $x$,
$\val_{\sigma^*,\tau^*} (x) = \val_{*,*}(x)$
always exists and these strategies are said to be \emph{optimal strategies}. It
is polynomial-time equivalent to compute optimal strategies or to compute the
values of all nodes in the game.


\begin{definition}[Stopping SSG]
  An SSG is said to be \emph{stopping} if for every couple of strategies almost all plays eventually reach a sink node.
\end{definition}
Usually, this condition is required in order to ensure simple optimality conditions
(Thm. \ref{th:local_optimality_conditions3} below). 
Condon \cite{condon1992complexity} proved that every SSG $G$ can be reduced in
polynomial time to a stopping SSG $G'$ whose size is quadratic in the size of
$G$, and whose values almost remain the same.  The values of the new game are
close enough to recover the values of the original game. 
A problem for us is that squaring the size of the game does not behave well
relatively to precise complexity bounds.

However, in our case we need a milder condition. We call a \M-strategy $\sigma$ stopping  if, for any \m-strategy $\tau$, the random play consistent with $(\sigma, \tau)$ reaches a sink with probability one.

\begin{theorem}[Optimality conditions, \cite{condon1992complexity}]\label{th:local_optimality_conditions3}
  Let $G$ be an SSG, $\sigma$ a stopping \M-strategy and $\tau$ a \m-strategy. Then $(\sigma, \tau)$ are optimal strategies if and only if 
  \begin{itemize}
  \item for every $\displaystyle x \in \VM$, $\displaystyle \val_{\sigma,\tau}(x) = \max_{(x,y) \in A} \val_{\sigma,\tau} (y)$;
  \item for every $x \in \Vm$, $\displaystyle \val_{\sigma,\tau}(x) = \min_{(x,y) \in A} \val_{\sigma,\tau} (y)$.
  \end{itemize} 
\end{theorem}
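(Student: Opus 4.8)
The plan is to treat the two implications separately, the substantive one being ``$\Leftarrow$''. First I would record the basic identity: conditioning on the first move shows that for any pair $(\sigma,\tau)$ the vector $\val_{\sigma,\tau}$ solves the ``first-step'' linear system $\val_{\sigma,\tau}(x)=\val(x)$ on $\Vs$, $\val_{\sigma,\tau}(x)=\val_{\sigma,\tau}(\sigma(x))$ on $\VM$, $\val_{\sigma,\tau}(x)=\val_{\sigma,\tau}(\tau(x))$ on $\Vm$, and $\val_{\sigma,\tau}(x)=\sum_{(x,y)\in A}p(x)(y)\,\val_{\sigma,\tau}(y)$ on $\Vr$; when $\sigma$ is stopping, this system has a unique solution because for every $\m$-strategy $\tau'$ the $(\sigma,\tau')$-play reaches a sink almost surely. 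I would also observe that one may assume $\val\ge 0$ on sinks: shifting all sink values by a common constant changes neither optimality nor the two local conditions, and it makes every $\val_{\sigma,\tau}$ nonnegative -- this will be used only to control plays of $\M$ that never reach a sink.

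For ``$\Rightarrow$'' I would invoke the classical fact that the optimal value vector $\val_{*,*}$ is itself a solution of the \emph{optimality equations} -- the variant of the first-step system with ``$\max_{(x,y)\in A}$'' at $\M$-nodes and ``$\min_{(x,y)\in A}$'' at $\m$-nodes -- which holds for every SSG (e.g.\ via Condon's reduction to stopping games and value iteration). If $(\sigma,\tau)$ are optimal then $\val_{\sigma,\tau}=\val_{*,*}$, and substituting this equality into the optimality equations yields exactly the two displayed conditions.

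The heart is ``$\Leftarrow$''. Assume the two local conditions hold and set $v:=\val_{\sigma,\tau}$; I would show $v$ equals both best-response values. \emph{(i) $\tau$ best-responds to $\sigma$:} for an arbitrary $\m$-strategy $\tau'$, along the $(\sigma,\tau')$-play $(X_t)$ started at $x$ one checks, node type by node type, that $\esp{v(X_{t+1})\mid X_t}=v(X_t)$ except at $\m$-nodes, where the $\min$-condition gives $\esp{v(X_{t+1})\mid X_t}=v(\tau'(X_t))\ge\min_{(X_t,y)\in A}v(y)=v(X_t)$; hence $v(X_t)$ is a bounded submartingale, and since $\sigma$ is stopping the play reaches a sink a.s., so $v(X_t)$ converges to the play value and bounded convergence gives $\val_{\sigma,\tau'}(x)\ge v(x)$, whence $\val_{\sigma,*}=v$. \emph{(ii) $\sigma$ best-responds to $\tau$:} symmetrically, for an arbitrary $\M$-strategy $\sigma'$ the process $v(X_t)$ along the $(\sigma',\tau)$-play is a bounded \emph{super}martingale, the only inequality being $v(\sigma'(X_t))\le\max_{(X_t,y)\in A}v(y)=v(X_t)$ at $\M$-nodes; it converges a.s.\ to some $L\ge 0$, and since the play value equals $L$ when a sink is reached and $0$ otherwise, $\val_{\sigma',\tau}(x)\le\esp{L}\le v(x)$, whence $\val_{*,\tau}=v$.

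The delicate point -- and the only place the normalization is needed -- is the asymmetry in step (ii): a deviation $\sigma'$ of $\M$ need not keep the game stopping, so the $(\sigma',\tau)$-play may avoid every sink and the supermartingale limit $L$ need not be a sink value; nonnegativity of $v$ rescues the bound, since then $\val_{\sigma',\tau}(x)=\esp{L\,\mathbf{1}[\text{a sink is reached}]}\le\esp{L}\le v(x)$ unconditionally. Finally, combining (i), (ii) with $\val_{*,*}=\max_{\sigma}\val_{\sigma,*}=\min_{\tau}\val_{*,\tau}$ gives $v=\val_{\sigma,*}\le\val_{*,*}\le\val_{*,\tau}=v$, so $\val_{\sigma,\tau}=\val_{*,*}$ and $(\sigma,\tau)$ are optimal. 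Apart from this, the proof is the routine case analysis behind the one-step (sub/super)martingale identities and the uniqueness of the value vector for a stopping strategy.
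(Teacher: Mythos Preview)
The paper does not prove this theorem at all: it is stated with a citation to Condon~\cite{condon1992complexity} and used as a black box, so there is no ``paper's own proof'' to compare against. Your proposal therefore cannot be judged for fidelity, only for correctness, and on that score it is essentially sound.

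A few remarks. The martingale argument for ``$\Leftarrow$'' is the standard one and is carried out correctly; the asymmetry you flag in step~(ii) is real, and your fix via the additive normalization $\val\ge 0$ on sinks is exactly the right way to handle a deviating $\sigma'$ that may fail to be stopping: the supermartingale limit $L$ dominates the play value because the latter is $0$ on non-reaching plays while $L\ge 0$. One small point worth making explicit is why $v=\val_{\sigma,\tau}\ge 0$ after the shift: this follows because $\sigma$ is stopping, so every $(\sigma,\tau)$-play reaches a sink a.s.\ and $v(x)$ is an expectation of nonnegative sink values. For ``$\Rightarrow$'' you lean on the fact that $\val_{*,*}$ satisfies the Bellman/optimality equations; that is fine as a citation, but note that you can also get it directly from your own machinery: once you have established ``$\Leftarrow$'', apply it to an optimal pair $(\sigma^*,\tau^*)$ with $\sigma^*$ stopping (whose existence you may take from the same reference), and observe that if a local condition failed at some node then the corresponding one-step switch would, by your (sub/super)martingale inequality, strictly improve the relevant best-response value, contradicting optimality. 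Either route is acceptable.
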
 

%
%

\paragraph*{Switches and strategy improvement}

Consider the usual partial order on real vectors indexed by $V$, i.e.
for $w_1, w_2 \in \R^V$, denote $w_1 \leq w_2$ if $w_1(x) \leq w_2(x)$
for all $x \in V$, and denote $w_1 < w_2$ if $w_1 \leq w_2$ and at least one
inequality is strict. For two $\M$-strategies $\sigma, \sigma'$, simply denote $\sigma \leq \sigma'$
(resp. $\sigma < \sigma'$) if $\val_{\sigma,*} \leq \val_{\sigma',*}$ (resp.
$\val_{\sigma,*} < \val_{\sigma',*}$).
Define a similar order on $\m$-strategies.

A switch, given a strategy, is the fact of changing this strategy at a node
(or a set of nodes) in order to obtain a new one.

\begin{definition}
  Let $\sigma, \sigma'$ be $\M$-strategies. We say that  $\sigma'$ is
  {\it a profitable switch} of $\sigma$ if
  for all $x \in \VM$, one has $\val_{\sigma,*}(\sigma'(x)) \geq  \val_{\sigma,*}(\sigma(x))$
    with this condition strict for at least one $\M$-node (such a node
    is said to be {\it switchable}).
\end{definition}

Indeed, the following result states that such a switch actually improves values

\begin{theorem}[\cite{condon1990algorithms}, \cite{tripathi2011strategy}] \label{th:switch_augmentation}
  If $\sigma'$ is a profitable switch of $\sigma$, then $\sigma' > \sigma$.
\end{theorem}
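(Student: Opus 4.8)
The plan is to fix a profitable switch $\sigma'$ of $\sigma$ and show $\val_{\sigma',*} \geq \val_{\sigma,*}$ with at least one strict inequality. First I would recall that $\val_{\sigma,*}$ is the vector of values when $\M$ plays $\sigma$ and $\m$ plays a best response $\tau$ to $\sigma$; by the optimality conditions (Theorem~\ref{th:local_optimality_conditions3} applied in the one-player game where $\M$'s strategy is frozen, or directly by the standard fixed-point characterization), the vector $w := \val_{\sigma,*}$ is the unique solution of the local equations $w(x) = w(\sigma(x))$ for $x \in \VM$, $w(x) = \min_{(x,y)\in A} w(y)$ for $x \in \Vm$, $w(x) = \sum_y p(x)(y)\, w(y)$ for $x \in \Vr$, and $w(x) = \val(x)$ for $x \in \Vs$ (using that the relevant plays reach a sink almost surely, which holds here since we only ever apply this with stopping strategies, or after the standard reduction). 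The key point is that when we re-evaluate $w$ against the new strategy $\sigma'$, the defining equations still hold everywhere except possibly at the switchable $\M$-nodes, where $w(\sigma'(x)) \geq w(\sigma(x)) = w(x)$, with strict inequality somewhere.

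The main step is then a monotonicity argument on the one-player Markov decision process obtained by fixing $\M = \sigma'$. Consider $w' := \val_{\sigma',*}$ and $w = \val_{\sigma,*}$. From the profitable-switch hypothesis, $w$ is a \emph{subsolution} of the optimality system for the frozen game $G_{\sigma'}$: at every $\M$-node $w(x) = w(\sigma(x)) \le w(\sigma'(x))$, at every $\m$-node $w(x) = \min_y w(y)$, at every random node and sink $w$ satisfies the equation with equality. I would then invoke the standard fact (the same one underlying Theorem~\ref{th:switch_augmentation}'s cited sources) that in a stopping one-player game, any vector that is $\le$ its one-step image under the fixed policies' operator lies below the fixed point, i.e. $w \le w'$. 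Concretely this follows by iterating the monotone operator $T_{\sigma'}$: $w \le T_{\sigma'}(w) \le T_{\sigma'}^2(w) \le \cdots \to w'$, where convergence uses the stopping property (contraction / absorption into sinks). This gives $\val_{\sigma',*} \ge \val_{\sigma,*}$.

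To upgrade $\le$ to $<$, I would track where strictness enters. Let $x_0$ be a switchable node, so $w(\sigma'(x_0)) > w(\sigma(x_0)) = w(x_0)$, hence $T_{\sigma'}(w)(x_0) > w(x_0)$. Since $T_{\sigma'}$ is monotone and $w \le T_{\sigma'}(w)$, the coordinate-wise gap $T_{\sigma'}^{t}(w) - w$ is nondecreasing in $t$ and is strictly positive at $x_0$ for all $t \ge 1$; passing to the limit, $w'(x_0) > w(x_0)$. Therefore $\val_{\sigma',*} \ge \val_{\sigma,*}$ with strict inequality at $x_0$, which by definition means $\sigma' > \sigma$.

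The part requiring the most care is the justification that $w = \val_{\sigma,*}$ really does satisfy the frozen-game equations with equality at $\m$-nodes and random nodes, and that the iteration $T_{\sigma'}^t(w) \to \val_{\sigma',*}$ converges — i.e. handling the stopping hypothesis cleanly. For a self-contained argument one either assumes the game is stopping (so every policy pair absorbs into sinks at a geometric rate, making $T_{\sigma'}$ an eventual contraction in sup-norm) or first applies Condon's reduction to a stopping game; since the statement is quoted from~\cite{condon1990algorithms,tripathi2011strategy}, I would simply cite this monotone-iteration lemma rather than reprove it, and devote the written proof to the bookkeeping that isolates the strict coordinate $x_0$.
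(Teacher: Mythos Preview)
The paper does not give its own proof of this theorem: it is stated with citations to \cite{condon1990algorithms} and \cite{tripathi2011strategy} and used as a black box. So there is nothing in the paper to compare your argument against at the level of proof details.

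That said, your plan is the standard one and is correct in outline. You freeze $\M$ to $\sigma'$, observe that $w=\val_{\sigma,*}$ is a subsolution of the resulting one-player Bellman operator $T_{\sigma'}$ (equality at $\m$, $\ran$, sink nodes; $\geq$ at $\M$-nodes by the profitable-switch hypothesis), iterate monotonically, and pass to the unique fixed point $\val_{\sigma',*}$. The strictness bookkeeping at a switchable node $x_0$ is fine. The only point that genuinely needs care, and which you already flag, is the stopping hypothesis: uniqueness of the fixed point of $T_{\sigma'}$ and convergence of the iterates rely on plays under $\sigma'$ (against any $\m$-response) reaching a sink almost surely. In the paper's setting this is ensured either by the CF assumption (a stopping $\M$-strategy exists and the constructions preserve it) or by Condon's reduction; since the paper itself only cites the result, deferring this point to \cite{condon1990algorithms,tripathi2011strategy} is consistent with how the paper treats it.
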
 

Before ending this section, please note that Th.~\ref{th:local_optimality_conditions3}
can be restated in terms of nonexistence of switchable node. Hence, we have the
following result:

\begin{theorem} \label{th:noswicth}
  A stopping \M-strategy is optimal if and only if it has no switchable nodes.
\end{theorem}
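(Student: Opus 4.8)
The plan is to derive Theorem~\ref{th:noswicth} directly from the two results immediately preceding it, namely Theorem~\ref{th:local_optimality_conditions3} and Theorem~\ref{th:switch_augmentation}, without reopening the (nontrivial) argument that optimal strategies exist and satisfy a minimax identity. Throughout, fix a stopping \M-strategy $\sigma$. The key observation is that the statement to prove only speaks about $\sigma$, so I will pair it with a suitable \m-strategy and invoke the optimality conditions.

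First I would handle the easy direction: if $\sigma$ is optimal, then there is an optimal \m-strategy $\tau$ with $\val_{\sigma,\tau} = \val_{*,*}$, and $(\sigma,\tau)$ satisfies the two equalities of Theorem~\ref{th:local_optimality_conditions3}; in particular, for every $x \in \VM$ we have $\val_{\sigma,*}(\sigma(x)) = \val_{\sigma,\tau}(\sigma(x)) = \val_{\sigma,\tau}(x) = \max_{(x,y)\in A}\val_{\sigma,\tau}(y) \geq \val_{\sigma,\tau}(y)$ for all out-neighbours $y$. Hence no $\M$-node can be strictly improved, so no node is switchable. (I should double-check the identification $\val_{\sigma,*} = \val_{\sigma,\tau}$ when $\tau$ is a best response against $\sigma$, which is exactly the definition of $\val_{\sigma,*}$ as a minimum over \m-strategies, achieved here by the optimal $\tau$.)

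For the converse, suppose $\sigma$ has no switchable node; I must show $\sigma$ is optimal. Let $\tau$ be a best response of \m\ against $\sigma$, so $\val_{\sigma,\tau} = \val_{\sigma,*}$; by the local optimality conditions applied to the \m-player's best response (a standard fact: an optimal counter-strategy satisfies the min-equalities at $\m$-nodes, and the value vector $\val_{\sigma,*}$ is the unique solution of the corresponding fixed-point system for the Markov decision process obtained by fixing $\sigma$), the \m-side equalities of Theorem~\ref{th:local_optimality_conditions3} already hold for $(\sigma,\tau)$. It remains to establish the \M-side equalities: for every $x\in\VM$, $\val_{\sigma,\tau}(x) = \max_{(x,y)\in A}\val_{\sigma,\tau}(y)$. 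The inequality $\val_{\sigma,\tau}(x) = \val_{\sigma,\tau}(\sigma(x)) \le \max_{(x,y)\in A}\val_{\sigma,\tau}(y)$ is immediate. If it were strict for some $x$, then there would be an out-neighbour $y$ of $x$ with $\val_{\sigma,*}(y) > \val_{\sigma,*}(\sigma(x))$, and redefining $\sigma$ at $x$ to point to $y$ would be a profitable switch, making $x$ switchable --- contradiction. So both families of equalities hold, and Theorem~\ref{th:local_optimality_conditions3} gives that $(\sigma,\tau)$ are optimal, hence $\sigma$ is an optimal \M-strategy.

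I expect the main obstacle to be purely expository rather than mathematical: being careful about what "no switchable node'' means in terms of $\val_{\sigma,*}$ versus $\val_{\sigma,\tau}$, and justifying cleanly that a best response $\tau$ against a \emph{stopping} $\sigma$ exists and that $(\sigma,\tau)$ automatically meets the \m-node equalities of Theorem~\ref{th:local_optimality_conditions3} --- this is where the stopping hypothesis is used, to guarantee the relevant MDP values are well defined and characterized by those equalities. Once that bookkeeping is in place, the proof is a two-line application of Theorems~\ref{th:local_optimality_conditions3} and~\ref{th:switch_augmentation} (the latter only really needed implicitly, to know that the existence of a switchable node is genuinely incompatible with optimality), so the write-up should be short.
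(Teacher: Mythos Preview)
Your proposal is correct and follows exactly the route the paper takes: the paper does not give a formal proof at all, but simply remarks that Theorem~\ref{th:local_optimality_conditions3} ``can be restated in terms of nonexistence of switchable node'' and then states Theorem~\ref{th:noswicth} as an immediate consequence. Your argument just unpacks that restatement in detail---pairing $\sigma$ with a best response $\tau$, checking the \m-side equalities via the one-player MDP, and observing that a failure of the \M-side equality is precisely the existence of a switchable node---so there is nothing to compare beyond noting that you have made explicit what the paper leaves implicit.
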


For the last section, we require another form of switch.

\begin{theorem}[\cite{condon1990algorithms}, \cite{tripathi2011strategy}] \label{th:switch_augmentation2}
  Let $\sigma, \sigma'$ be stopping $\M$-strategies and $\tau,\tau'$ be $\m$-strategies
  such that
  for all $x \in \VM$, $\val_{\sigma,\tau}(\sigma'(x)) \geq  \val_{\sigma,\tau}(\sigma(x))$ 
    and  for all $x \in \Vm$, $ \val_{\sigma,\tau}(\tau'(x)) \geq  \val_{\sigma,\tau}(\tau(x))$
    with one of these conditions strict for at least one node.
    Then
    $\val_{\sigma', \tau'} > \val_{\sigma,\tau}.$
  \end{theorem}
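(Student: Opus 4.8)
The plan is to reduce this statement to the already-established single-player improvement result (Theorem \ref{th:switch_augmentation}) by working in a modified game where one player's strategy is frozen. The key observation is that fixing the $\m$-strategy $\tau$ turns the game into a one-player game (an SSG with no $\m$-nodes, obtained by contracting each $\m$-node $x$ onto $\tau(x)$, or equivalently by treating $\m$-nodes as having their unique out-arc $\tau(x)$). In this frozen game $G_\tau$, the value of an $\M$-strategy $\sigma$ is exactly $\val_{\sigma,\tau}$ in the original game. Symmetrically, freezing $\sigma$ gives a one-player game $G^\sigma$ for the minimizer, in which $\m$-strategies play the role that $\M$-strategies play in Theorem \ref{th:switch_augmentation} (after the usual sign flip / reversal of the order for the minimizer).

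First I would handle the two "pure" cases separately. Case (a): the strict inequality occurs at an $\M$-node. Then in the frozen game $G_\tau$, the hypothesis $\val_{\sigma,\tau}(\sigma'(x)) \geq \val_{\sigma,\tau}(\sigma(x))$ for all $x \in \VM$, strict somewhere, says precisely that $\sigma'$ is a profitable switch of $\sigma$ in $G_\tau$; since $\sigma$ is stopping, Theorem \ref{th:switch_augmentation} applied in $G_\tau$ gives $\val_{\sigma',\tau} > \val_{\sigma,\tau}$. Case (b): the strict inequality occurs at an $\m$-node; applying the analogous statement in the frozen game $G^{\sigma}$ for the minimizer yields $\val_{\sigma,\tau'} < \val_{\sigma,\tau}$, i.e. $\val_{\sigma,\tau} > \val_{\sigma,\tau'}$ (here I use that a profitable switch for $\m$ strictly decreases values, the mirror of Theorem \ref{th:switch_augmentation}; this is exactly the form stated in the cited references). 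Then I would compose the two: in the general situation I first switch only $\sigma$ to $\sigma'$ while keeping $\tau$, obtaining $\val_{\sigma',\tau} \geq \val_{\sigma,\tau}$ (non-strict, since the $\M$-side inequalities are only weak in general), and then switch $\tau$ to $\tau'$ while keeping $\sigma'$. For the second step I need the $\m$-side hypothesis transported from "under $(\sigma,\tau)$" to "under $(\sigma',\tau)$"; this is where care is required. The cleanest route is to not compose but rather to do a single combined switch: freeze nothing, and instead observe that the pair of inequalities says that $(\sigma',\tau')$ is a profitable combined switch of $(\sigma,\tau)$, and prove directly by a coupling/first-step argument that $\val_{\sigma',\tau'} > \val_{\sigma,\tau}$, essentially re-running the proof of Theorem \ref{th:switch_augmentation} but allowing switches on both node sets simultaneously — the monotonicity argument (value operators are monotone, and a strict local improvement propagates to a strict global improvement because $\sigma$ — hence the reachability structure — is stopping) goes through verbatim when both players may switch toward better-for-$\M$ successors, since both kinds of moves push the one-step value up.

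The main obstacle is the bookkeeping of \emph{which} strategies index the value function in the hypotheses versus the conclusion: the hypotheses are stated in terms of $\val_{\sigma,\tau}$, while a naive two-step argument would need $\val_{\sigma',\tau}$ or $\val_{\sigma,\tau'}$. I expect the hypotheses are chosen precisely so that the standard "potential/improvement" proof of Theorem \ref{th:switch_augmentation} — write $\val_{\sigma',\tau'}$ as the fixed point of the associated contraction/affine operator, compare it to $\val_{\sigma,\tau}$ using the local inequalities, and invoke stopping to rule out equality everywhere — applies directly to the combined switch without any intermediate strategy. So the real work is to state and verify that the one-step value operator associated with $(\sigma',\tau')$, applied to $\val_{\sigma,\tau}$, gives a vector $\geq \val_{\sigma,\tau}$ with at least one strict coordinate, and that stopping of $\sigma'$ (which follows from stopping of $\sigma$ together with the switch conditions, or can simply be assumed as in the statement) forces the fixed point to dominate $\val_{\sigma,\tau}$ strictly; the rest is the same monotone-iteration argument already used for Theorem \ref{th:switch_augmentation}, so I would cite it rather than reproduce it.
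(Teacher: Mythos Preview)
The paper does not prove this theorem at all: it is stated as a citation from \cite{condon1990algorithms} and \cite{tripathi2011strategy} and used as a black box (notably in the proof of Theorem~\ref{thm:pivot}). So there is no ``paper's own proof'' to compare against.

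On the merits of your proposal: your final approach --- show that the one-step operator associated with $(\sigma',\tau')$, applied to $\val_{\sigma,\tau}$, dominates $\val_{\sigma,\tau}$ coordinatewise with a strict inequality somewhere, and then use monotone iteration together with stopping of $\sigma'$ to conclude --- is exactly the standard argument in the cited references and is correct. However, your intermediate Case~(b) has the sign backwards. The hypothesis $\val_{\sigma,\tau}(\tau'(x)) \geq \val_{\sigma,\tau}(\tau(x))$ says that $\tau'$ moves to \emph{higher}-valued successors, which is \emph{unprofitable} for the minimizer; the mirror of Theorem~\ref{th:switch_augmentation} then gives $\val_{\sigma,\tau'} \geq \val_{\sigma,\tau}$, not $\val_{\sigma,\tau'} < \val_{\sigma,\tau}$ as you wrote. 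This is harmless for the overall plan since you abandon the two-step route anyway, but it is worth noting because if you had tried to push the composition argument through, this sign error would have led you astray. The two-step route is in any case genuinely problematic for the reason you identify (the hypotheses are stated under $\val_{\sigma,\tau}$, not under the intermediate $\val_{\sigma',\tau}$), so your instinct to go directly to the combined monotone-operator argument is the right one.
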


\paragraph*{Orders}

For $k \geq 1$ consider the set of integers $[1,k] =  \{1,2,\cdots,k\}$ and let
$\total(k)$ denote the set of total orders on $[1,k]$.
For sake of clarity we view these orders as sets of couples $(i,j) \in [1,k]^2$
satisfying reflexivity, transitivity and antisymmetry.

If $t \in \total(k)$, it can also be described in  {\it ascending ordering} such as
$[x_1, x_2, \dots, x_k]$ where $(x_i,x_j) \in t$ if and only if $i \leq j$.
An {\it interval} in $t$ is a sequence of consecutive elements in ascending
ordering.
The rank of an element $x \in [1,k]$ is the number of elements that are lower of
equal to $x$ in $t$, i.e. it is $i$ if $x = x_i$ with notation above.

For lack of a better word, we define a {\it pretotal order} as an antisymmetric
and reflexive relation and denote by $\pretotal(k)$ the set of pretotal orders on
$[1,k]$. If $p \in \pretotal(k)$ and $(i,j) \not\in p$ is such that $p \cup
\{(i,j)\}$ is still antisymmetric, we denote simply by $p+(i,j)$ this new
pretotal order.

%
%
If $t \in \total(k)$ and $v_1, v_2, \cdots,v_k$ are real numbers, we say that
the $v_i$'s are {\it nondecreasing} along $t$ if $ (i,j) \in t \Rightarrow v_i \leq v_j$.
Likewise, we say that $t$ is a \emph{nondecreasing order} for $v_1, v_2, \dots,
v_k$.

\section{Iterative formulation of Ludwig's algorithm} \label{sec:ludwig}

In this part, we suppose that $G$ is $\M$-binary. Hence, if a node $x$ is
switchable there is a single possibility for changing the strategy's choice at this node.
Let $switch(\sigma,x)$ denote the profitable switch obtained from $\sigma$
by switching $\sigma$ at node $x$.
\smallskip

\subsection{Bland's rule version}

In \cite{ludwig1995subexponential}, Ludwig mentions that his algorithm is a version of Bland's rule,
however he does not make it explicit and gives a recursive
definition. We formulate his algorithm iteratively (see Algorithm~\ref{alg:ludwig}), and show that instead of randomly choosing a node at every step,
we can choose a total order on nodes prior to the execution of the algorithm.
This version uses much less random bits : $O(n \log n)$ bits instead
of  $2^{O(\sqrt{n})}$ in average in Ludwig's version.
\begin{algorithm}
\DontPrintSemicolon
  \SetKwInOut{Input}{input}\SetKwInOut{Output}{output}
  \Input{$G$ $\max$-binary SSG, initial stopping $\M$-strategy $\sigma$.}
  \Output{an optimal $\M$-strategy}
  \BlankLine
  $\cdot$ Pick randomly and uniformly a total order $\Theta$ on $\M$-nodes\;
  \While{$\sigma$ is not optimal}{
    $\cdot$ compute the set of \emph{switchable} nodes for $\sigma$\;
    $\cdot$ let $x$ be the first switchable node in order $\Theta$\;
    $\cdot$ $\sigma \longleftarrow switch(\sigma, x)$\
  }
  \Return{$\sigma$}
  \caption{Bland's rule formulation for Ludwig's Algorithm} \label{alg:ludwig}
\end{algorithm}

By Theorems \ref{th:switch_augmentation} and \ref{th:noswicth} if
we proceed by switching Strategy $\sigma$ until there are no more switchable nodes,
we reach an optimal strategy in a finite number of steps.
The number of steps is at most the number of $\M$-strategies,
i.e. $2^{|\VM|}$. However, we have the following:

\begin{theorem} \label{thm:ludwigglobal}  
  The expected number of strategies considered by  Alg.~\ref{alg:ludwig} is at most $e^{2\sqrt{|\VM|}}$.
\end{theorem}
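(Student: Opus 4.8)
The plan is to analyze Algorithm~\ref{alg:ludwig} recursively by exposing the hidden recursion of Ludwig's algorithm inside the iterative formulation, mirroring Ludwig's own analysis but keyed to the fixed random order $\Theta$ rather than to fresh random choices at each step. Let me write $m = |\VM|$, enumerate the $\M$-nodes as $y_1, \dots, y_m$ in the order $\Theta$ (so $y_1$ is smallest), and let $T(m)$ denote the supremum over all $\M$-binary stopping instances with $m$ $\M$-nodes of the expected number of strategies considered, where the expectation is over the uniform choice of $\Theta$. The key structural observation I would establish first: once the algorithm performs its last switch at the $\Theta$-minimal node $y_1$, the node $y_1$ is never switched again, because after switching $y_1$ the strategy restricted to $\{y_2, \dots, y_m\}$ is optimized (by Bland's rule, $y_1$ is touched only when no smaller switchable node exists — but $y_1$ is the smallest, so $y_1$ is switched whenever it is switchable, and the play thereafter only improves values, so $\sigma(y_1)$ eventually freezes at its optimal choice). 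More precisely, I would argue that the run splits into at most two phases relative to $y_1$: before the (unique, if any) final switch of $y_1$, and after it; and that in the second phase the algorithm behaves exactly like Algorithm~\ref{alg:ludwig} run on the sub-game where $y_1$'s choice is frozen, which has $m-1$ free $\M$-nodes.

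The recursion then goes as follows. Condition on the rank $r$ of $y_1$ in $\Theta$ — but it is cleaner to follow Ludwig and Kalai: consider the node $x$ that the algorithm switches last among all nodes, i.e. the node whose final switch occurs at the very last iteration. With the fixed order $\Theta$, I would instead condition on which $\M$-node is $\Theta$-minimal and track how many times it is switched. The cleaner route: let $x$ be the $\Theta$-minimal node. Run the algorithm; let $\sigma_1$ be the strategy just after the last time $x$ is switched (or the initial strategy if $x$ is never switched). Everything before that point is a run of the algorithm on the game with one fewer free node (namely with $x$'s two successors each fixed in turn — but since $x$ is binary, before its final switch $x$ sits at one of its two values, and the segment of the run up to the final switch of $x$ is a run of Algorithm~\ref{alg:ludwig} on the $(m-1)$-node subgame obtained by fixing that value). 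Everything after is again a run on an $(m-1)$-node subgame (fixing the other value of $x$, which is now optimal for $x$). The order induced on the remaining $m-1$ nodes is still uniform, so each segment costs $T(m-1)$ in expectation; we also pay for the iterations at $x$ itself.

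Quantifying the cost at $x$ requires the standard Ludwig/Kalai argument. The point is that $x$ is switched at most once after the recursive subcall that precedes it: after the first segment optimizes the other $m-1$ nodes (relative to $x$'s current value), either $x$ is not switchable and we are done, or $x$ is switched once, and then the second segment optimizes the remaining nodes again and no further switch of $x$ can be profitable (values only increase, and the new value of $x$ was already a best response). So the recursion is essentially $T(m) \le 2\,T(m-1) + 1$ — but this gives only $2^{O(m)}$, not $e^{2\sqrt m}$. To get the subexponential bound one must use the Kalai–Ludwig trick: instead of taking the $\Theta$-minimal node, observe that for a uniformly random order, the node that ends up being switched last is uniformly distributed, and the expected amount of "backtracking" work is bounded by a harmonic-sum recursion $T(m) \le T(m-1) + \frac{1}{m}\sum_{j=1}^{m} T(j-1) \cdot(\text{something})$, which solves to $T(m) \le e^{2\sqrt m}$. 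Concretely I would prove the recursion $T(m) \le \frac{1}{m}\sum_{i=0}^{m-1}\big(T(i) + T(m-1)\big) + 1$, or the slightly different form appearing in Kalai's analysis, and then verify by induction that $T(m) \le e^{2\sqrt{m}}$ satisfies it, using the integral estimate $\sum_{i=0}^{m-1} e^{2\sqrt i} \le \int_0^{m} e^{2\sqrt t}\,dt = e^{2\sqrt m}(\sqrt m - \tfrac12) \cdot \text{(up to lower order)}$.

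The main obstacle I anticipate is making the reduction to subgames rigorous: I must check that fixing the $\Theta$-minimal node's choice and restricting attention to the other nodes really does produce a well-defined $\M$-binary stopping SSG on $m-1$ free nodes, that "switchable" and "profitable switch" are inherited correctly by the subgame (this uses Theorems~\ref{th:switch_augmentation} and~\ref{th:noswicth}), and above all that Bland's rule on $\Theta$ restricted to the subgame coincides with Bland's rule on $\Theta$ on the full game as long as the minimal node is not switchable — which is exactly why picking a single random order up front suffices and no re-randomization is needed. The second delicate point is the independence/uniformity claim: conditioned on which node is $\Theta$-minimal, the relative order of the remaining $m-1$ nodes is uniform, so the two recursive segments each have expected cost $T(m-1)$ (or, in the refined harmonic version, $T(i)$ for the appropriate $i$); I would spell this out carefully since it is the crux of why the bound matches Ludwig's despite using only $O(n\log n)$ random bits.
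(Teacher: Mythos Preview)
Your decomposition hinges on the $\Theta$-\emph{minimal} node, and this is where the argument breaks. Under Bland's rule the minimal node $y_1$ is switched \emph{every} time it becomes switchable, and nothing prevents this from happening many times: after switching $y_1$ from $a$ to $b$, subsequent improvements elsewhere can raise $\val_{\sigma,*}(a)$ above $\val_{\sigma,*}(b)$ and make $y_1$ switchable again. So the run does not split into ``at most two phases'' relative to $y_1$, the ``first segment'' does not optimize the remaining $m-1$ nodes (it stops the moment $y_1$ becomes switchable), and even your intermediate recursion $T(m)\le 2T(m-1)+1$ is not justified. The paper (and Ludwig) take the $\Theta$-\emph{maximal} node $v_0$ instead: Bland's rule touches $v_0$ only when every other node is already non-switchable, i.e.\ only when the current strategy is optimal among $(\sigma,F\cup\{v_0\})$-compatible strategies, and this forces $v_0$ to be switched at most once.

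The second missing idea is the source of the harmonic term. You write down the correct recursion $\Phi(m)\le \Phi(m-1)+1+\tfrac{1}{m}\sum_{i=0}^{m-1}\Phi(i)$ but give no mechanism producing it. In the paper the point is that after the first recursive call and the single switch of $v_0$, one can freeze not just $v_0$ but an entire set $B$ of nodes: every $v$ with $\opt(\sigma_0,F\cup\{v\})\not>\opt(\sigma_0,F\cup\{v_0\})$ is provably never switched in the second phase. Since $v_0$ is uniform in $\VM\setminus F$, a short poset lemma (at most $i$ elements $v$ have $|B(v)|\le i$) shows $\mathbb{P}(|B|\le i)\le i/(m-|F|)$, and summation by parts turns this into the $\tfrac{1}{m}\sum\Phi(i)$ term. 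Your proposal never identifies $B$ or the poset argument, so the step from the two-phase decomposition to the subexponential recursion is a genuine gap, not just a matter of filling in details.
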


\subsection{Analysis of Algorithm~\ref{alg:ludwig}}

Our strategy to prove Theorem~\ref{thm:ludwigglobal}
is to reformulate Alg.~\ref{alg:ludwig} as a recursive algorithm  (see
Alg.~\ref{alg:ludwigrecursifpartial}), which is close to Ludwig's algorithm
in \cite{ludwig1995subexponential}. The proofs are quite similar to Ludwig's, with a bit of caution on the moments where random choices are made. 
In particular, we detail our strategy in this part since it will be helpful to
understand our results in section \ref{sec:main_alg} where the context
is more involved.

Stated as above, it is perhaps unclear how Alg.~\ref{alg:ludwig} has a recursive
structure. Too see this, consider an execution of Alg.~\ref{alg:ludwig}, and let
$x_1$ be the last $\M$-node in the order $\Theta$. In the beginning,
the current strategy $\sigma$ makes an initial choice $\sigma(x_1)$ on $x_1$,
which does not change until the first time when $x_1$ becomes switchable (if this
happens).
If $x_1$ is switched, then $\sigma(x_1)$ will then remain unchanged
until the end of this algorithm. Hence, once $\Theta$ is fixed, we can think of
this execution as two parts, where $\sigma(x_1)$ is fixed in each part. These can
then be decomposed as subparts where $\sigma(x_1)$ and $\sigma(x_2)$ are fixed (where $x_2$ is
the second-to-last $\M$-node in order $\Theta$), and so on.

\subsubsection*{Generalization to partially fixed strategies}

To formalize the discussion above, we give a
generalization which can be applied to the case where $\sigma(x)$
is fixed for some vertices in a given set $F$ (see Alg.~\ref{alg:ludwigdeterministe}).

In the following, if $F$ is a set of $\M$-nodes and $\sigma$ is a $\M$-strategy,
a {\it $(\sigma,F)$-compatible strategy} is any $\M$-strategy $\sigma'$ such
that $\forall x \in F$, $\sigma'(x) = \sigma(x).$
For $F$ and $\sigma$ fixed, there is always a $(\sigma,F)$-strategy that is
better than all others.
It can be obtained by solving the game where any $x \in F$ is replaced by a
random node with a probability $1$ to go to $\sigma(x)$.
We call such a $(\sigma,F)$-compatible strategy {\it optimal} and we denote it
by $\opt(\sigma, F)$.
In particular, an optimal $(\sigma,\emptyset)$-strategy is an optimal strategy
for $G$, whereas $\sigma$ is the only $(\sigma,\VM)$-compatible strategy.
\begin{algorithm}
\DontPrintSemicolon
  \SetKwInOut{Input}{input}
  \SetKwInOut{Output}{output}
  \Input{$G$ $\M$-binary SSG, total order $\Theta$ on $\VM$, subset $F \subset
    \VM$, initial $\M$-strategy $\sigma = \sigma_0$.}
  \Output{a $(\sigma,F)$-compatible optimal $\M$-strategy $\opt(\sigma, F)$.}
  \BlankLine \While{$\sigma$ is not an optimal $(\sigma_0,F)$-compatible
    strategy}{$\cdot$ compute the set of \emph{switchable} nodes for $\sigma$\;
    $\cdot$ let $v$ be the first switchable node in order $\Theta$ which is not in $F$\;
    $\cdot$ $\sigma \longleftarrow switch(\sigma, v)$\
  }
  \Return{$\sigma$}
  \caption{Iterative formulation for Ludwig's Algorithm with partial
    strategies} \label{alg:ludwigdeterministe} 
\end{algorithm} 

\subsubsection*{Recursive reformulation}

Finally, we give a recursive version of Alg.~\ref{alg:ludwigdeterministe} (see
Alg.~\ref{alg:ludwigrecursifpartial}) which we use to derive the bound. The
equivalence between these two algorithms should be clear by the previous explanations.

\begin{algorithm} 
\DontPrintSemicolon
  \SetKwInOut{Input}{input}\SetKwInOut{Output}{output}
  \Input{$G$ $\M$-binary SSG, total order $\Theta$ on $\VM$, subset $F \subset \VM$, initial $\M$-strategy $\sigma_0$.}
  \Output{a $(\sigma,F)$-compatible optimal $\M$-strategy $\opt(\sigma, F)$.}
  \BlankLine
  \lIf{$F==\VM$}{return $\sigma$}
  
  \Else{
    $\cdot$ Let $v_0$ be the last node not in $F$ according to order $\Theta$\;
    $\cdot$ Recursively compute $\sigma_1$, an optimal $(\sigma_0,F\cup\{v_0\})$-compatible
    strategy\;
    \lIf{$\sigma_1$ is an optimal $(\sigma_0,F)$-compatible strategy}{\Return{$\sigma_1$}}
    
    \Else{
      $\cdot$ Let $\sigma_2 \longleftarrow switch(\sigma_1,v_0)$\;
      $\cdot$ Recursively compute $\sigma^*$, an optimal $(\sigma_2,F\cup\{v_0\})$-compatible
      strategy\;
      \Return{$\sigma^*$}
    }
  }
  
  \caption{Recursive formulation for Ludwig's Algorithm with partial
    strategies} \label{alg:ludwigrecursifpartial} 
\end{algorithm} 

\subsubsection*{Evaluating the number of switches}

Let $f^\Theta(\sigma, F)$ be the total number of 
switches performed by Algorithm \ref{alg:ludwigrecursifpartial} on input $\sigma, \Theta, F$.
We consider for the following lemma an execution of this algorithm.

\begin{lemma}\label{ludwig-lemme1}

  Let $\sigma_0$ be the initial strategy and
  $v_0$ be the last node  which is not in $F$, according to order $\Theta$. Define
  $B \subset \VM \setminus F$ to be the set of nodes $v$ such that
  $\opt(\sigma_0, F \cup \{v\} ) \not> \opt(\sigma_0, F \cup \{v_0\}).$
  Then
  $f^\Theta(\sigma_0, F) \leq f^\Theta(\sigma_0, F \cup \{v_0\}) + 1 + 
    f^\Theta( \sigma_2,   F \cup B \}),$
  where $\sigma_2$ is $\opt(\sigma_0, F \cup \{v_0\}) = \sigma_1$,
  switched at $v_0$.
\end{lemma}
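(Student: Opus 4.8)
The plan is to analyze the recursive call tree of Algorithm~\ref{alg:ludwigrecursifpartial} on input $(\sigma_0, \Theta, F)$ and bound its cost by the cost of the two recursive subcalls plus the single switch at $v_0$. Concretely, the algorithm first computes $\sigma_1 = \opt(\sigma_0, F\cup\{v_0\})$ via a recursive call, at cost $f^\Theta(\sigma_0, F\cup\{v_0\})$; if $\sigma_1$ is already optimal for $(\sigma_0,F)$ the total cost is just that, and the claimed inequality holds trivially (since $f^\Theta$ is nonnegative and $F\cup B$ is a superset of $F\cup\{v_0\}$ only up to the monotonicity I discuss below). Otherwise one switch is performed ($+1$), yielding $\sigma_2 = switch(\sigma_1, v_0)$, and a second recursive call computes $\opt(\sigma_2, F\cup\{v_0\})$ at cost $f^\Theta(\sigma_2, F\cup\{v_0\})$. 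So the obvious bound is $f^\Theta(\sigma_0,F) \le f^\Theta(\sigma_0, F\cup\{v_0\}) + 1 + f^\Theta(\sigma_2, F\cup\{v_0\})$, and the whole point of the lemma is to replace the last term's fixed set $F\cup\{v_0\}$ by the larger set $F\cup B$, which is what makes the bad-set recursion of the later complexity analysis go through.

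The key step, therefore, is to show that during the entire execution of the second recursive call — computing $\opt(\sigma_2, F\cup\{v_0\})$ — no node of $B\setminus\{v_0\}$ is ever switched, so that run behaves exactly as the run on $(\sigma_2, F\cup B)$. I would argue this by a monotonicity/value argument: by Theorem~\ref{th:switch_augmentation} every strategy $\sigma$ visited after $\sigma_2$ satisfies $\sigma \ge \sigma_2 = \opt(\sigma_0, F\cup\{v_0\})$ switched at $v_0$, which strictly dominates $\opt(\sigma_0, F\cup\{v_0\}) = \sigma_1$; hence every such $\sigma$ has $\val_{\sigma,*} > \val_{\sigma_1,*} = \val_{\opt(\sigma_0, F\cup\{v_0\}),*}$. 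For a node $v \in B$, by definition $\opt(\sigma_0, F\cup\{v\}) \not> \opt(\sigma_0, F\cup\{v_0\})$; since $\opt(\sigma_0, F\cup\{v\})$ is the best $(\sigma_0, F\cup\{v\})$-compatible strategy, \emph{any} $(\sigma_0, F\cup\{v\})$-compatible strategy — in particular any strategy agreeing with $\sigma_0$ on $F$ and choosing $\sigma_0(v)$ at $v$ — has value $\le \opt(\sigma_0, F\cup\{v\})$-value, which does not strictly exceed the $\sigma_1$-value. If at some point in the second recursive call a node $v\in B$ (other than $v_0$) were switchable and switched, the current strategy $\sigma$ would still have $\sigma(v) = \sigma_0(v)$ just before the switch (nodes in $F$ are never touched, $v\notin F$ but its value was never switched in this subcall because... ) — this is where the argument needs care: I must track that before $v$ is switched for the first time in the subcall, the current strategy restricted to $F\cup\{v\}$ equals $\sigma_0$ there, hence it is $(\sigma_0,F\cup\{v\})$-compatible, hence has value $\le \val_{\sigma_1,*}$, contradicting $\val_{\sigma,*} > \val_{\sigma_1,*}$.

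The main obstacle I anticipate is exactly pinning down that last "$\sigma$ is still $(\sigma_0, F\cup\{v\})$-compatible just before $v$'s first switch in the second subcall". Since $\sigma_2$ differs from $\sigma_1 = \opt(\sigma_0, F\cup\{v_0\})$ only at $v_0$, and $\sigma_1$ agrees with $\sigma_0$ on $F$, we get $\sigma_2$ agrees with $\sigma_0$ on $F$ and equals $\sigma_0$'s... no — $\sigma_1 = \opt(\sigma_0, F\cup\{v_0\})$ need not agree with $\sigma_0$ outside $F\cup\{v_0\}$. So I need: $\sigma_2$ agrees with $\sigma_0$ on $F$ (true, since $\sigma_1$ does and $v_0\notin F$), and for $v\in B$, $v\ne v_0$, before $v$ is switched for the first time the current strategy still has the value it inherited at $v$ from $\sigma_2$, which is $\sigma_1(v)$. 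The cleanest fix is to strengthen $B$'s relevance: rather than tie to $\sigma_0(v)$, observe $\sigma_2$ restricted to $F\cup\{v\}$ gives a $(\sigma_2, F\cup\{v\})$-compatible strategy, and one should compare $\opt(\sigma_2, F\cup\{v\})$ with $\sigma_1$ — but the hypothesis is phrased via $\sigma_0$, so I will need the additional (easy, by Theorem~\ref{th:switch_augmentation2} or a direct value computation) fact that for $v\notin\{v_0\}$, $\opt(\sigma_2, F\cup\{v_0,v\})$ relates to $\opt(\sigma_0, F\cup\{v\})$ appropriately, or simply invoke that all strategies compatible with fixing $v$ to its $\sigma_2$-choice and $F$ to $\sigma_0$ have bounded value. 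I would organize the write-up as: (1) the trivial case, (2) the basic $+1$ + two-subcall decomposition, (3) the claim that the second subcall never switches a node in $B$, proved by the value-monotonicity contradiction above, (4) conclude that the second subcall's switch count equals $f^\Theta(\sigma_2, F\cup B)$, giving the stated bound.
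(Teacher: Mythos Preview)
Your overall plan matches the paper's: split the run at the (unique) switch of $v_0$, get the decomposition
\[
f^\Theta(\sigma_0,F) \;=\; f^\Theta(\sigma_0,F\cup\{v_0\}) + 1 + f^\Theta(\sigma_2,F),
\]
and then argue that no node of $B$ is ever switched in the second part so that $f^\Theta(\sigma_2,F) = f^\Theta(\sigma_2,F\cup B)$. You even correctly isolate the one nontrivial step. But your argument for that step has a genuine gap, and the fixes you sketch do not close it.

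The difficulty you ran into is real: you want to say ``just before $v\in B$ is first switched in the second subcall, the current strategy still has $\sigma(v)=\sigma_0(v)$'', and you noticed this fails because $\sigma_2(v)=\sigma_1(v)$ need not equal $\sigma_0(v)$. Your attempted repairs (rephrasing $B$ via $\sigma_2$, invoking $\opt(\sigma_2,F\cup\{v_0,v\})$, etc.) wander away from the hypothesis, which is stated in terms of $\sigma_0$.

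The paper sidesteps this entirely by arguing the contrapositive and then using that the game is $\M$-binary. Take \emph{any} strategy $\sigma'$ occurring in the second part (so $\sigma' \ge \sigma_2 > \sigma_1 = \opt(\sigma_0,F\cup\{v_0\})$) and suppose $\sigma'(v)=\sigma_0(v)$ for some $v\in \VM\setminus F$. Since $\sigma'$ agrees with $\sigma_0$ on $F$ (nodes in $F$ are never switched) and at $v$, it is $(\sigma_0,F\cup\{v\})$-compatible, hence $\sigma' \le \opt(\sigma_0,F\cup\{v\})$. Combining,
\[
\opt(\sigma_0,F\cup\{v_0\}) \;<\; \sigma' \;\le\; \opt(\sigma_0,F\cup\{v\}),
\]
so $\opt(\sigma_0,F\cup\{v\}) > \opt(\sigma_0,F\cup\{v_0\})$ and $v\notin B$. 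Thus for every $v\in B$ and every strategy $\sigma'$ in the second part, $\sigma'(v)\neq\sigma_0(v)$. Because each $\M$-node has exactly two outneighbours, this forces $\sigma'(v)$ to be the \emph{same} (non-$\sigma_0$) choice throughout the entire second part, and hence $v$ is never switched there. This is precisely the step you were missing: you do not need to know what $\sigma_2(v)$ is, only that the value at $v$ can never become $\sigma_0(v)$, and binarity turns ``never equals $\sigma_0(v)$'' into ``constant''.
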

\begin{proof}
  By design of the algorithm, nodes of $F$ are never switched.    
  If $v_0$ is never switched, then we have \[ f^\Theta(\sigma_0, F) = f^\Theta(\sigma_0,
    F \cup \{v_0\})\] hence the result is true.

  Suppose from now on that $v_0$ is switched during the execution. Then, it is switched only once and we can divide the computation of $\opt(\sigma_0,F)$ in two parts:

  \begin{itemize}
  \item in a first part the algorithm computes $\opt(\sigma_0, F \cup \{v_0\})$,
    and this last strategy is switched at $v_0$,  hence obtaining
    $\sigma_2$;
  \item then in a second part the algorithm computes $\opt(\sigma_2, F)$.
  \end{itemize} 
  Hence, we have in the case that $v_0$ is switched,
  \[ f^\Theta(\sigma_0, F) = f^\Theta(\sigma_0, F \cup \{v_0\}) + 1 + 
    f^\Theta( \sigma_2,   F ).\]

  It remains to see that in the second part, nodes from $B$ will never be
  switched, so that $f^\Theta( \sigma_2,   F ) =   f^\Theta( \sigma_2,   F
  \cup B )$, hence the result.

  Let $\sigma'$ be a strategy obtained during the second part of the algorithm
  (after $v_0$ is switched) such that $\sigma'(v) = \sigma_0(v)$ for a
  $v \in \VM \setminus F$. On the one hand, by definition of $\opt$
  we have 
  \[\sigma' \leq \opt(\sigma', F \cup \{v\}) = \opt(\sigma_0, F \cup \{v\}).\]
  On the other hand, since $\sigma'$ is obtained after $\sigma_2$, then
  \[ \opt(\sigma_0, F \cup \{v_0\}) < \sigma_2 \leq \sigma'. \] 
  By transitivity we see that
  \[ \opt(\sigma_0, F \cup \{v_0\} ) < \opt(\sigma_0, F \cup \{v\}). \]
  hence $v \not\in B$.

  Therefore, in the second part of the algorithm, all strategies $\sigma'$
  satisfy $\sigma'(v) \neq \sigma_0(v)$ for all $v \in B$, hence nodes in $B$
  have been switched in the first part and never will be in the second.
\end{proof} 

Now, let us denote $\Phi(n) = \sup_{G,\sigma} \espsb{\Theta}{}{ f^\Theta_G(\sigma, \emptyset)}$
where the supremum is considered over all SSG $G$ with $n$ $\M$-nodes and all
$\M$-strategies $\sigma$ in $G$. The average is considered over
all possible prior choices of order $\Theta$, the rest of
the algorithm being deterministic.


\begin{lemma} \label{lem:ineq1}
  For all $n \geq 1$, $\Phi(n) \leq \Phi(n-1) + 1 + \frac1n \sum_{i=0}^{n-1} \Phi(i).$
\end{lemma}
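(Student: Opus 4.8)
The plan is to apply Lemma~\ref{ludwig-lemme1} with $F = \emptyset$ and take expectations over the uniformly random order $\Theta$, exploiting the fact that the first recursive call in Algorithm~\ref{alg:ludwigrecursifpartial} is on the strictly smaller set $F \cup \{v_0\}$, while the second recursive call starts from the enlarged frozen set $F \cup B$. First I would fix an arbitrary SSG $G$ with $n$ $\M$-nodes and an arbitrary initial strategy $\sigma_0$, and let $v_0$ be the last node of $\Theta$; since $\Theta$ is chosen uniformly at random, $v_0$ is a uniformly random $\M$-node, and moreover, conditionally on the value of $v_0$, the restriction of $\Theta$ to $\VM \setminus \{v_0\}$ is still uniform. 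This is the key independence observation: the recursive subproblems inherit a uniformly random order, so their expected switch counts are bounded by $\Phi$ of the appropriate size.

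Next I would bound the three terms of Lemma~\ref{ludwig-lemme1} in expectation. The term $f^\Theta(\sigma_0, F \cup \{v_0\})$ is the cost of a subproblem with $n-1$ free nodes and a uniformly random order on them, so its expectation is at most $\Phi(n-1)$. The ``$+1$'' contributes $1$. For the last term $f^\Theta(\sigma_2, F \cup B)$, the crucial point is that $v_0 \in B$ (indeed $\opt(\sigma_0, F \cup \{v_0\}) \not> \opt(\sigma_0, F \cup \{v_0\})$ trivially, since the relation $>$ is strict), so $B$ always contains at least one element, hence $|\VM \setminus (F \cup B)| \leq n - 1$; but we want a better bound on average. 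The definition of $B$ says $v \in B$ iff $\opt(\sigma_0, F \cup \{v\}) \not> \opt(\sigma_0, F \cup \{v_0\})$, and because $v_0$ is the node maximizing (in the $\leq$ order on strategies) among all choices in the sense that it is the \emph{last} in a random order — here is where I need the standard ``backwards analysis'' argument: rank the nodes $v \in \VM$ by the value $\opt(\sigma_0, \{v\})$ (breaking ties consistently); if $v_0$ happens to be the node of rank $n-i$ among these, then $B$ contains all nodes of rank $\geq n-i$, so $|\VM \setminus B| \leq i$ and the second recursive call has at most $i$ free nodes, contributing at most $\Phi(i)$. Since $v_0$ is uniform, each value $i \in \{0, 1, \dots, n-1\}$ occurs with probability $1/n$, giving $\frac1n \sum_{i=0}^{n-1} \Phi(i)$ as the bound on the expectation of the last term.

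Putting these together, $\espsb{\Theta}{}{f^\Theta_G(\sigma_0, \emptyset)} \leq \Phi(n-1) + 1 + \frac1n \sum_{i=0}^{n-1} \Phi(i)$, and taking the supremum over $G$ and $\sigma_0$ yields the claimed inequality $\Phi(n) \leq \Phi(n-1) + 1 + \frac1n \sum_{i=0}^{n-1} \Phi(i)$. The main obstacle I anticipate is making rigorous the claim that the set $B$, as defined via the strategy order, is \emph{downward closed} with respect to the ranking of nodes by $\opt(\sigma_0, F \cup \{v\})$ — one has to verify that if $\opt(\sigma_0, F \cup \{v\}) \leq \opt(\sigma_0, F \cup \{v'\})$ and $v' \in B$ then $v \in B$, which follows from transitivity of the order on strategies together with the definition of $B$, so that the cardinality bound $|\VM \setminus (F \cup B)| \le \mathrm{rank}(v_0) - 1$ holds. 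A secondary subtlety is handling ties in the ordering of the $\opt(\sigma_0, F\cup\{v\})$ values: these only help (they make $B$ larger), so any consistent tie-breaking is fine, but it should be mentioned. Once the downward-closedness and the uniformity of the rank of $v_0$ are established, the conditioning argument and the final averaging are routine.
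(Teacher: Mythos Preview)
Your plan is essentially the paper's proof, with one technical substitution: where the paper invokes the poset inequality of Lemma~\ref{lem:poset1} (namely $|\{x : a(x)\le i\}|\le i$) together with an Abel summation to turn $\sum_i \Proba(|B^t|\le i)\,(\Phi(n,k+i)-\Phi(n,k+i+1))$ into $\frac1{n-k}\sum_i \Phi(n,k+i)$, you go instead through a fixed linear extension of the strategies $\opt(\sigma_0,F\cup\{v\})$ and use that the rank of $v_0$ is uniform.  Both routes are valid and yield the same recursion; your version is arguably more direct since it bypasses the Abel rewriting entirely, while the paper's version has the advantage that Lemma~\ref{lem:poset1} is a clean standalone statement that can be reused elsewhere.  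One presentational point: the paper introduces the two-parameter quantity $\Phi(n,k)$ to keep track of games with $k$ frozen vertices, whereas you silently identify such a game with a smaller SSG on $n-k$ $\M$-nodes; this identification is correct (freezing a $\M$-node is the same as replacing it by a trivial $\ran$-node) but deserves a sentence.

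There is one real gap to watch.  You speak of ``ranking the nodes by the value $\opt(\sigma_0,\{v\})$, breaking ties consistently,'' and then argue via downward closedness of $B$ in the strategy order.  But the order $\le$ on strategies is genuinely \emph{partial}: two strategies $\opt(\sigma_0,\{v\})$ and $\opt(\sigma_0,\{v'\})$ may be incomparable, not merely tied, so ``ranking'' must mean choosing a linear extension.  Your downward-closedness statement (if $\opt(v)\le\opt(v')$ and $v'\in B$ then $v\in B$) is true but is not what you need: it only says $B$ is a lower set in the partial order, which does not by itself force $B$ to be an initial segment of the linear extension.  The correct (and simpler) claim is that in any linear extension $L$, $\mathrm{rank}_L(v)\le\mathrm{rank}_L(v_0)$ implies $\opt(v)\not>\opt(v_0)$ (because a linear extension preserves strict comparabilities), hence $v\in B$; this gives $|B|\ge\mathrm{rank}_L(v_0)$ and thus $|\VM\setminus(F\cup B)|\le n-\mathrm{rank}_L(v_0)$, not $\mathrm{rank}(v_0)-1$ as you wrote.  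Once this is fixed your averaging argument goes through verbatim, since $\frac1n\sum_{r=1}^n \Phi(n-r)=\frac1n\sum_{i=0}^{n-1}\Phi(i)$.  (Incidentally, the paper's Lemma~\ref{lem:poset1} is exactly the distilled form of this linear-extension observation: $a(x)=|\{y:y\not>x\}|\ge\mathrm{rank}_L(x)$ for any linear extension $L$, whence $|\{x:a(x)\le i\}|\le i$.)
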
 

We shall first need the following result.

\begin{lemma}\label{lem:poset1}
  Let $(X, \leq)$ be a partially ordered set and define for any $x \in X$ 
  $a(x) = |\{ y \in X : y \not> x \}|$, i.e. the number of elements that are not
  greater than $x$.  Then for all $0 \leq i \leq |X|$, we have
  \[ | \{ x : a(x) \leq i \} | \leq i .\]
\end{lemma}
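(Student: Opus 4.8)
The plan is to prove the contrapositive-flavored bound by a clean counting argument on the poset $(X,\leq)$. Fix $i$ with $0 \leq i \leq |X|$ and let $S = \{x \in X : a(x) \leq i\}$; the goal is to show $|S| \leq i$. The key observation is that $S$, viewed as a subposet, behaves well: for every $x \in S$, the set $\{y \in X : y \not> x\}$ has size at most $i$, and in particular it contains all of $\{y \in X : y \leq x\}$, so $x$ can have at most $i-1$ elements strictly below it \emph{in all of $X$}, hence at most $i-1$ elements strictly below it in $S$.

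The main step is to extract, inside $S$, a chain or an antichain structure that forces $|S| \leq i$. I would argue as follows. Consider a maximal element $x_1$ of $S$ (with respect to $\leq$ restricted to $X$; if $x_1$ is not maximal in $X$, pick any element of $S$ that is maximal among elements of $S$ — but actually it is cleaner to take $x_1$ maximal in $X$ among elements of $S$). Then no element of $S$ is strictly greater than $x_1$, so $S \subseteq \{y : y \not> x_1\}$, and since $a(x_1) \leq i$ this immediately gives $|S| \leq a(x_1) \leq i$. Wait — I need to be careful that $x_1 \in S$ and that every element of $S$ satisfies $y \not> x_1$: that holds precisely because $x_1$ was chosen maximal \emph{within} $S$, so indeed no $y \in S$ has $y > x_1$. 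Thus $S \subseteq \{y \in X : y \not> x_1\}$ and $|S| \le a(x_1) \le i$.

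So the whole proof reduces to: pick $x_1 \in S$ maximal in $S$ (which exists since $X$ is finite and $S$ is nonempty — if $S$ is empty the bound is trivial), observe $S \subseteq \{y : y \not> x_1\}$, and conclude $|S| \le a(x_1) \le i$ by definition of $S$. The only subtlety worth spelling out is the existence of a maximal element of a nonempty finite poset, and the verification that maximality \emph{inside} $S$ suffices for the inclusion $S \subseteq \{y : y \not> x_1\}$ — which is immediate, since if some $y \in S$ satisfied $y > x_1$ this would contradict maximality of $x_1$ in $S$. I expect no real obstacle here; the "hard part" is just resisting the temptation to overcomplicate it — the bound is almost tautological once one selects a maximal element of $S$.

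\begin{proof}[Proof of Lemma~\ref{lem:poset1}]
  Fix $i$ with $0 \leq i \leq |X|$ and set $S = \{x \in X : a(x) \leq i\}$. If
  $S = \emptyset$ the claimed inequality $|S| \leq i$ holds trivially, so assume
  $S \neq \emptyset$. Since $X$ is finite, the nonempty set $S$, partially
  ordered by the restriction of $\leq$, has a maximal element $x_1$; that is,
  there is no $y \in S$ with $y > x_1$. Consequently every $y \in S$ satisfies
  $y \not> x_1$, which means $S \subseteq \{ y \in X : y \not> x_1 \}$. Taking
  cardinalities and using $x_1 \in S$, we obtain
  \[ |S| \leq |\{ y \in X : y \not> x_1 \}| = a(x_1) \leq i, \]
  as desired.
\end{proof}
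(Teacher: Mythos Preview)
Your proof is correct and follows essentially the same approach as the paper: both pick a maximal element of the set $S=\{x:a(x)\le i\}$ and use its maximality in $S$ to bound $|S|$ by $i$. The only cosmetic difference is that you argue via the direct inclusion $S\subseteq\{y:y\not>x_1\}$, whereas the paper phrases the same step via the complement (the at least $|X|-i$ elements strictly above the maximal element lie outside $S$).
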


\begin{proof}
  Fix $0 \leq i \leq |X|$ and consider the set $P \subset X$ of $x \in X$ with
  $a(x) \leq i$. Let $x_0$ be maximal among elements of $P$. Since there are at
  least $n-i$ elements in $X$ 
  that are strictly greater than $x_0$, and that these elements are not in $P$ by
  maximality of $x_0$, we have $|P| + n-i \leq n$ i.e. $|P| \leq i$.
\end{proof}

\begin{proof} 
  First, denote $\Phi(n,k)$ for $n \geq 1$ and $0 \leq k \leq n$ 
  \[\Phi(n,k) = \sup_{G,\sigma,H} \espsb{\Theta}{}{ f^\Theta_G(\sigma, H)} \]
  where the supremum is considered over all SSG $G$ with $n$ $\M$-nodes,
  subsets $H \subset \VM$ of size $k$ and all $\M$-strategies $\sigma$.

  Consider $G, H$ and $\sigma$ fixed, with $|\VM|=n$ and $|H|=k$. Using notation of Lemma \ref{ludwig-lemme1},
  we have 
  \[ \espsb{\Theta}{}{f^\Theta(\sigma, H)} \leq \espsb{\Theta}{}{f^\Theta(\sigma,
      H \cup \{v_0^t\})} + 1 + \espsb{\Theta}{}{ f^\Theta( \sigma', H \cup  B^t \})
    }.\]

  Here, we denote $v_0$ and $B$ by $v_0^t$ and $B^t$ to stress the fact
  that these are random variables
  depending on $\Theta$, whereas
  everything else ( i.e. $G, F, \sigma$) is fixed.

  First, since for all $v \not\in H$,
  \[ \espsb{\Theta}{}{f^\Theta(\sigma,
      H \cup \{v\})} \leq \Phi(n,k+1) \]
  we have \[\espsb{\Theta}{}{f^\Theta(\sigma,
      H \cup \{v_0^t\})} \leq \Phi(n,k+1). \]
  Now,
  \begin{align*}
    \espsb{\Theta}{}{ f^\Theta( \sigma', H \cup  B^t \})} &= \sum_{i=1}^{n-k} \pros{t}{}{|B^t|=i}  \cdot \espsb{\Theta}{}{ f^\Theta( \sigma', H \cup  B^t \})\ \big\vert \ |B^t|=i} \\
                                                &\leq  \sum_{i=1}^{n-k}   \pros{t}{}{|B^t|=i}  \cdot \Phi(n,k+i).
  \end{align*} 

  The sum on the right can easily be rewritten as

  \[ \sum_{i=1}^{n-k}   \pros{t}{}{|B^t| \leq i}  \cdot \big(\Phi(n,k+i) -
    \Phi(n,k+i+1)\big) 
  \]
  where we defined for convenience $\Phi(n,n+1) = 0$, and used the fact that
  $|B^t| \geq 1$.

  Using now Lemma \ref{lem:poset1} on the set of strategies $\opt(\sigma,H \cup
  \{v\})$ for $v \not\in H$, we see that
  \[  \pros{t}{}{|B^t| \leq i} \leq \frac{i}{n-k}\]
  since $v_0$ is uniformly chosen in $\VM \setminus H$.

  So, since $\Phi(n,k+i) \geq \Phi(n,k+i+1)$, we deduce that
  \begin{align*}
    \espsb{\Theta}{}{ f^\Theta( \sigma', H \cup  B^t \})} &\leq 
                                                  \sum_{i=1}^{n-k}   \frac{i}{n-k}  \cdot \big(\Phi(n,k+i) - \Phi(n,k+i+1)\big) \\
                                                &= \sum_{i=1}^{n-k}   \frac{1}{n-k}  \cdot \Phi(n,k+i).
  \end{align*} 
\end{proof}

In order to conclude and prove Theorem~\ref{thm:ludwigglobal}, we now just have
to infer the bound for sequences satisfying the conclusion of Lemma \ref{lem:ineq1}.

\begin{lemma}[Lemma $9$ of~\cite{ludwig1995subexponential}] \label{lem:borneF}
  Let $\Phi(n)$ be such that $\Phi(0)=0$ and for all $n \geq 1$,\\
  $\Phi(n) \leq \Phi(n-1) + 1 + \frac1n \sum_{i=0}^{n-1} \Phi(i).$
  Then for all $n \geq 0$, $\Phi(n) \leq e^{2\sqrt{n}}.$
\end{lemma}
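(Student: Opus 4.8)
The plan is to prove the bound $\Phi(n) \le e^{2\sqrt n}$ by strong induction on $n$, using the recurrence $\Phi(n) \le \Phi(n-1) + 1 + \frac1n\sum_{i=0}^{n-1}\Phi(i)$ as the inductive step. The base case $n=0$ is immediate since $\Phi(0)=0 \le 1 = e^0$. For the inductive step, I would assume $\Phi(i) \le e^{2\sqrt i}$ for all $i < n$ and substitute into the right-hand side, so that it suffices to show
\[
  e^{2\sqrt{n-1}} + 1 + \frac1n\sum_{i=0}^{n-1} e^{2\sqrt i} \le e^{2\sqrt n}.
\]

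The key analytic idea is to bound the sum $\sum_{i=0}^{n-1} e^{2\sqrt i}$ by an integral. Since $t \mapsto e^{2\sqrt t}$ is increasing, $\sum_{i=0}^{n-1} e^{2\sqrt i} \le \int_0^{n} e^{2\sqrt t}\,dt$. Substituting $t = u^2$, this integral equals $\int_0^{\sqrt n} e^{2u}\cdot 2u\,du$, which integrates by parts to $(\sqrt n - \tfrac12)e^{2\sqrt n} + \tfrac12$. Hence $\frac1n\sum_{i=0}^{n-1}e^{2\sqrt i} \le \frac{1}{n}\bigl((\sqrt n - \tfrac12)e^{2\sqrt n} + \tfrac12\bigr) \le \frac{e^{2\sqrt n}}{\sqrt n}$ after discarding lower-order terms (this discard must be checked carefully for small $n$, or one keeps the exact bound). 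So the inductive step reduces to establishing
\[
  e^{2\sqrt{n-1}} + 1 + \frac{e^{2\sqrt n}}{\sqrt n} \le e^{2\sqrt n},
\]
i.e. that $e^{2\sqrt{n-1}} + 1 \le e^{2\sqrt n}\bigl(1 - \tfrac{1}{\sqrt n}\bigr)$.

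To finish, I would control $e^{2\sqrt{n-1}}$ relative to $e^{2\sqrt n}$. Writing $\sqrt n - \sqrt{n-1} = \frac{1}{\sqrt n + \sqrt{n-1}} \ge \frac{1}{2\sqrt n}$, we get $e^{2\sqrt{n-1}} \le e^{2\sqrt n} e^{-1/\sqrt n} \le e^{2\sqrt n}\bigl(1 - \tfrac{1}{\sqrt n} + \tfrac{1}{2n}\bigr)$ using $e^{-x}\le 1 - x + x^2/2$. Then the required inequality follows provided $e^{2\sqrt n}\cdot\tfrac{1}{2n} + 1 \le 0$ is avoided — which it is not directly, so one needs the extra slack: comparing $e^{2\sqrt n}(1-\tfrac1{\sqrt n} + \tfrac1{2n})$ against $e^{2\sqrt n}(1 - \tfrac1{\sqrt n})$ leaves a surplus of $\tfrac{e^{2\sqrt n}}{2n}$ on the wrong side, so this crude estimate is not quite enough and the argument must instead use a sharper bound on $\sqrt n - \sqrt{n-1}$ or retain the $+\tfrac12$ and the $(\sqrt n - \tfrac12)$ factor exactly and verify the resulting one-variable inequality (e.g. by checking it for small $n$ directly and using monotonicity of the gap for large $n$). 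This last reconciliation of constants is the main obstacle: the asymptotics are clearly fine, but making every term line up for all $n\ge 1$ — including the smallest values where the integral approximation is loosest — requires care, and I would handle the small cases $n \le n_0$ by direct numerical verification and the tail $n > n_0$ by the clean asymptotic estimate. Since this is exactly Lemma~9 of~\cite{ludwig1995subexponential}, I would also note that one may simply cite that reference for the computation.
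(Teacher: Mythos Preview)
The paper gives no proof of this lemma at all: it is stated as Lemma~9 of~\cite{ludwig1995subexponential} and simply cited. Your sketch is the standard inductive argument behind that lemma (strong induction, bound the sum by $\int_0^n e^{2\sqrt t}\,dt$, compare $e^{2\sqrt{n-1}}$ to $e^{2\sqrt n}$ via $\sqrt n-\sqrt{n-1}\ge \tfrac1{2\sqrt n}$), and your own observation that the final constant-matching does not close with the crude estimates is accurate: with $e^{-x}\le 1-x+x^2/2$ and the integral value $(\sqrt n-\tfrac12)e^{2\sqrt n}+\tfrac12$, the slack terms cancel exactly and one is left needing $1+\tfrac1{2n}\le 0$. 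To make the argument go through for all $n$ one really does need either a slightly sharper lower bound on $\sqrt n-\sqrt{n-1}$ (e.g.\ the next Taylor term $\tfrac1{8n^{3/2}}$, which produces an extra $\tfrac{e^{2\sqrt n}}{4n^{3/2}}$ of room, more than enough to absorb the stray $1+\tfrac1{2n}$ once $n$ is moderate) together with a finite check for small $n$, exactly as you propose. Since the paper itself defers to the citation, your concluding remark that one may simply invoke~\cite{ludwig1995subexponential} matches the paper's treatment precisely.
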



\section{Simple stochastic games with few random nodes}\label{sec:main_alg}

The idea that in an SSG, the optimal strategies depend only
on the ordering of the values of $\ran$-nodes, and not on their actual values, has been introduced by Gimbert and Horn in \cite{gimbert2008simple}. Their main idea is that, if one gives an ordering $r_1 r_2
\cdots r_k$ of $\ran$-nodes such that $\val_{*,*}(r_i)$ is nondecreasing with $i$,
then $\M$ will try to reach a node $r_i$ with $i$ as high as
possible, whereas $\m$ will try to minimize this index; this idea
is hereafter formalized by the notion of forcing sets and forcing strategies
(sec. \ref{sub:modified}). Gimbert and Horn use this
fact to derive an algorithm that will enumerate
all possible orders on $\ran$-nodes an will identify
one with the property mentionned above, yiedling the optimal strategies
and values for $G.$

The algorithm that we describe and analyse in the rest of this paper (Alg.~\ref{alg:iterative_random_ludwig}) uses
the same principle, but iterates through orders in a special
way, similarly to the iteration through strategies made
by Ludwig's algorithms (see sec. \ref{sec:ludwig}). We will derive
a similar bound for the average number of iterations of this randomized
algorithm.
Hence, our main algorithm is 
still a variation on Bland's rule for pivot selection. The difficulty here
does not lie in the proof of the bound, but in the description of the technique
used to iterate on orders. 

In \cite{gimbert2008simple}, the game remains the same during the execution of
the algorithm, but we proceed
differently: 
\begin{itemize}
  \item in section \ref{sub:modified}, we describe how to associate to every total order $t \in \total(k)$
a new SSG $G[t]$, and we show that this game can be solved in
polynomial time.
\item in section \ref{sub:pivot}, we prove that there is an optimal order $t^*
\in \total(k)$
such that the optimal values of $G[t^*]$ give directly the optimal values of $G$; it is also the
order that maximises values of $G[t]$ among all total orders $t$.
If an order $t$ is not optimal, we describe a {\it pivot} operation yielding from
$t$ a new order $t'$ such that the optimal values of $G[t']$ improve
those of $G[t]$.
\item the proof of the bound will be derived in section \ref{sec:analysis}.
\end{itemize}

\subsection{Modified game and forcing strategies} \label{sub:modified}

We need to assume that the games we consider enjoy some basic properties in order to describe our algorithm without considering too many special cases.

\begin{definition}
  An SSG is in \emph{canonical form} (CF) if $\M$ has a stopping strategy and only $\ran$-nodes can have an outgoing arc to a sink.
\end{definition}

To ensure these conditions, one can first
in linear time find and remove all nodes from which $\m$ player can
force the game never to reach neither a sink node nor a $\ran$-node
(see e.g. \cite{andersson2008deterministic, condon1992complexity}).
These nodes have value $0$ and can as well be removed from the game.
Then, all probabilities on $\ran$-nodes are modified by giving them a
very small probability to go to a sink. One can prove as in
\cite{condon1992complexity} that values remain almost the same.
The second condition ensures that all $\M$ and $\m$ nodes have to reach
a $\ran$-node in order to reach a sink. It can be done by adding
a dummy random node before every sink.

%

In all that follows we suppose that $G$ is an SSG in CF with random nodes $r_1,
r_2, \dots, r_k$. Let $t \in \total(k)$ be a total order on $[1,k]$. We define a game  $G[t]$ as follows (the same construction is presented in~\cite{dai2009new}).
Start with a copy of $G$. For every $1 \leq i \leq k$, add a $\m$-node denoted $i$ to $G[t]$, which we call {\it control node}; add an arc $(i,r_i) $; for every arc $(x,r_i) \in A$, remove this arc and add an arc $(x,i)$; finally, for every $(i,j) \in t$, $i \neq j$, add the arc $(i,j)$ to $G[t]$.


So basically, every control node $i \in [1,k]$ intercepts all arcs entering in
$r_i$ (see Fig. \ref{fig:structure}), and has an arc to every other control node $j \in [1,k]$ which is greater than $i$ in $t$.
In the game $G[t]$, the set of sinks, $\M$-nodes and $\ran$-nodes remain the
same as in $G$, whereas the set of $\m$-nodes will be denoted $\Vm \cup [1,k]$,
where $\Vm$ is the set of $\m$-nodes in $G$. This allows us to directly identify
$\M$-strategies in $G[t]$ and in $G$, and to identify projections onto
$\Vm$ of $\m$-strategies in $G[t]$, to $\m$-strategies in $G$.

\begin{figure}[!h] 
  \centering
  \caption{On the left, the structure of a game $G$ in canonical form, only
    random nodes can directly access sink nodes. On the right,
  the structure of $G[t]$.}\label{fig:structure}
  \begin{tikzpicture}[scale=0.6]
      \draw[rounded corners, line width=1pt] (-0.2,-0.1) rectangle (5.2,9.2) ;         
      
      \draw[rounded corners, line width=1pt] (0,0) rectangle (5,3.8) ;         
      \draw[rounded corners, line width=1pt] (0,4.2) rectangle (5,6.8) ;         
      \draw[rounded corners, line width=1pt] (0,7.2) rectangle (5,8.8) ;

      \node at (2.5,2) {$\VM \cup \Vm$};
      \node at (2.5,5.5) {$\Vr$};
      \node at (2.5,8) {$\Vs$};

      \draw[->, line width=2pt] (2.5,3.6) -- (2.5,4.4) ;
      \draw[->, line width=2pt] (1.5,3.6) -- (1.5,4.4) ;
      \draw[->, line width=2pt] (3.5,3.6) -- (3.5,4.4) ;

      \draw[->, line width=2pt] (1.5,6.6) -- (1.5,7.4) ;
      \draw[->, line width=2pt] (2.5,6.6) -- (2.5,7.4) ;
      \draw[->, line width=2pt] (3.5,6.6) -- (3.5,7.4) ;

      \draw[->, line width=2pt] (4.7,5.8) to[out = -30, in =30] (4.6,2.2) ;
      \draw[->, line width=2pt] (4.7,6.2) to[out = -30, in =30] (4.6,1.8) ;
 
    \begin{scope}[shift={(8,0)}]
      \draw[rounded corners, line width=1pt] (-0.2,-0.1) rectangle (5.2,9.2) ;         
      
      \draw[rounded corners, line width=1pt] (0,0) rectangle (5,2.8) ;         
      \draw[rounded corners, line width=1pt] (0,3.2) rectangle (5,4.8) ;         
      \draw[rounded corners, line width=1pt] (0,5.2) rectangle (5,6.8) ;         
      \draw[rounded corners, line width=1pt] (0,7.2) rectangle (5,8.8) ;

      \node at (2.5,1.5) {$\VM \cup \Vm$};
      \node at (2.5,4) {control nodes};
      \node at (2.5,6) {$\Vr$};
      \node at (2.5,8) {$\Vs$};

      \draw[->, line width=2pt] (2.5,2.6) -- (2.5,3.4) ;
      \draw[->, line width=2pt] (1.5,2.6) -- (1.5,3.4) ;
      \draw[->, line width=2pt] (3.5,2.6) -- (3.5,3.4) ;

      \draw[->, line width=2pt] (1.5,4.6) -- (1.5,5.4) ;
      \draw[->, line width=2pt] (2.5,4.6) -- (2.5,5.4) ;
      \draw[->, line width=2pt] (3.5,4.6) -- (3.5,5.4) ;

      \draw[->, line width=2pt] (1.5,6.6) -- (1.5,7.4) ;
      \draw[->, line width=2pt] (2.5,6.6) -- (2.5,7.4) ;
      \draw[->, line width=2pt] (3.5,6.6) -- (3.5,7.4) ;

      \draw[->, line width=2pt] (4.7,5.8) to[out = -30, in =30] (4.6,2.2) ;
      \draw[->, line width=2pt] (4.7,6.2) to[out = -30, in =30] (4.6,1.8) ;
    \end{scope} 
  \end{tikzpicture}
  \end{figure}
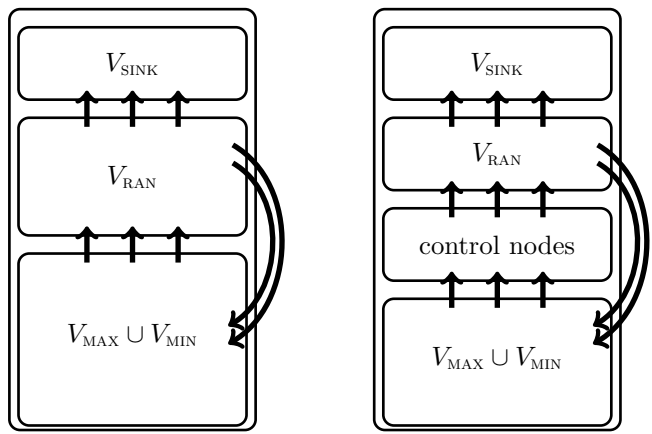

Now, suppose we remove first all sinks and
random nodes of $G[t]$, and then turn every control node $i$ into a
sink with a value equal to its rank in $t$. This transformation
clearly turns $G[t]$ into a game $G'$ without random nodes.

\begin{definition}[Forcing strategy]
  By identifying strategies in $G[t]$ and $G'$, we say that any
  optimal strategy for $\M$ or $\m$ in $G'$ is a \emph{ $t$-forcing
    strategy} of $G[t]$.
\end{definition} 

In $t$-forcing strategies, the players try to ensure the reaching of a control node as high
as possible for $\M$, and as low as possible for $\m$, in the order $t$.
We refer to \cite{andersson2008deterministic} and \cite{gimbert2008simple} for
more details about how one can compute these optimal strategies in linear time,
using the so-called {\it deterministic attractors}.


\begin{definition}[Forcing set]
  For any control node $i \in [1,k]$, define the \emph{forcing set for
    $i$}, denoted $\frc[t](i)$, as the set of $\M$ and $\m$-nodes that
  reach $i$ if the game is played with a couple $(\sigma_t,\tau_t)$ of
  $t$-forcing strategies (forcing sets are independant of the choice
  of the strategies as long as they are $t$-forcing).
\end{definition}



An example of an SSG turned into a modified SSG and of computation of forcing strategies is presented in
Fig.~\ref{fig.example}.
\begin{figure}[!h]
\hspace{-5.5em}
\begin{subfigure}[b]{0.3\textwidth}
  \centering
  \caption{}\label{fig.a}
      \begin{tikzpicture}[scale=0.6]
      \draw[rounded corners, line width=1pt] (-1,0) rectangle (9,9.1) ;         
      \tikzstyle{block} = [rectangle, draw, fill=blue!20, text width=2em, text centered, rounded corners, minimum height=2em]
      \tikzstyle{aver} = [circle, draw, fill=red!20, text centered, minimum height=2em]
      \tikzstyle{final} = [rectangle, draw, fill=yellow!30, text width=2.3em, text centered, rounded corners, minimum height=2em]
      \tikzstyle{etiquette}=[midway,fill=black!20]
      \tikzstyle{operation}=[->,>=latex, line width =1pt]
      
      \node[block] (M) at (3,2) {$M$} ;
      \node[block] (m) at (6,1) {$m$} ;

      \node[aver] (a) at (2,4) {$r_1$} ;
      \node[aver] (b) at (4,4) {$r_2$} ;
      \node[aver] (c) at (6,4) {$r_3$} ;

      \node[final] (0) at (0,7) {\tiny Val$=0$} ;
      \node[final] (5) at (4,7) {\tiny Val$=.5$} ;
      \node[final] (1) at (8,7) {\tiny Val$=1$} ;

      \draw[->,line width =1pt] (a) edge [bend left = 25, left] node {$.09$ }(0) ; 
      \draw[->,line width =1pt] (a) edge [bend left = 15, right] node {$.01$ }(1) ;
      \draw[->,line width =1pt, bend right = 80] (a) edge [left] node {$.9$ } (m) ;

      \draw[->,line width =1pt] (b) edge (5) ; 

      \draw[->,line width =1pt] (c) edge [ left] node{$.01$} (0) ; 
      \draw[->,line width =1pt] (c) edge [bend right = 20, left] node{$.09$} (1) ;
      \draw[->,line width =1pt] (c) edge[out=0,in=0, right] node{$.9$} (m) ;

      \draw[->,line width =1pt] (M) edge[bend right=15, left] (a) ; 
      \draw[->,line width =1pt] (M) edge[bend left=15, right] (b) ;
      
      \draw[->,line width =1pt] (m) edge[left] (M) ; 
      \draw[->,line width =1pt] (m) edge[bend left=15, right] (c) ;
    \end{tikzpicture}
\end{subfigure}
\hspace{5em}
\begin{subfigure}[b]{0.3\textwidth}
  \begin{center}
    \caption{}\label{fig.b}
    \begin{tikzpicture}[scale=0.6]
      \draw[rounded corners, line width=1pt] (-1,-1.1) rectangle (9,8) ;         
      \tikzstyle{block} = [rectangle, draw, fill=blue!20, text width=2em, text centered, rounded corners, minimum height=2em]
      \tikzstyle{aver} = [circle, draw, fill=red!20, text centered, minimum height=2em]
      \tikzstyle{final} = [rectangle, draw, fill=yellow!30, text width=2.3em, text centered, rounded corners, minimum height=2em]
      \tikzstyle{etiquette}=[midway,fill=black!20]
      \tikzstyle{operation}=[->,>=latex, line width =1pt]
      
      \node[block] (M) at (3.8,0.3) {$M$} ;
      \node[block] (m) at (6,0) {$m$} ;

      \node[block] (ma) at (2,2) {$1$} ;
      \node[block] (mb) at (4,2) {$2$} ;
      \node[block] (mc) at (6,2) {$3$} ;

      \node[aver] (a) at (1.5,4) {$r_1$} ;
      \node[aver] (b) at (4,4) {$r_2$} ;
      \node[aver] (c) at (6,4) {$r_3$} ;

      \node[final] (0) at (0,7) {\tiny Val$=0$} ;
      \node[final] (5) at (4,7) {\tiny Val$=.5$} ;
      \node[final] (1) at (8,7) {\tiny Val$=1$} ;

      \draw[->,line width =1pt] (a) edge [bend left = 25, left] node {$.09$ }(0) ; 
      \draw[->,line width =1pt] (a) edge [bend left = 15, right] node {$.01$ }(1) ;
      \draw[->,line width =1pt, bend right = 80] (a) edge [left] node {$.9$ } (m) ;

      \draw[->,line width =1pt] (b) edge (5) ; 

      \draw[->,line width =1pt] (c) edge [left] node{$.01$} (0) ; 
      \draw[->,line width =1pt] (c) edge [bend right = 20, left] node{$.09$} (1) ;
      \draw[->,line width =1pt] (c) edge[out=0,in=0, right] node{$.9$} (m) ;

      \draw[->,line width =1pt] (ma) edge (a) ; 
      \draw[->,line width =1pt] (mb) edge (b) ;
      \draw[->,line width =1pt] (mc) edge (c) ;

      \draw[->,line width =1pt] (M) edge[bend right=15] (ma) ; 
      \draw[->,line width =1pt] (M) edge[bend left=15] (mb) ;
      
      \draw[->,line width =1pt] (m) edge (M) ; 
      \draw[->,line width =1pt] (m) edge[bend left=15] (mc) ;
    \end{tikzpicture}
  \end{center}
\end{subfigure}
\hspace{5em}
\begin{subfigure}[b]{0.3\textwidth}
  \begin{center}
      \caption{}\label{fig.c}
    \begin{tikzpicture}[scale=0.6]
      \draw[rounded corners, line width=1pt] (-1,-1.1) rectangle (9,8) ;         
      \tikzstyle{block} = [rectangle, draw, fill=blue!20, text width=2em, text centered, rounded corners, minimum height=2em]
      \tikzstyle{aver} = [circle, draw, fill=red!20, text centered, minimum height=2em]
      \tikzstyle{final} = [rectangle, draw, fill=yellow!30, text width=2.3em, text centered, rounded corners, minimum height=2em]
      \tikzstyle{etiquette}=[midway,fill=black!20]
      \tikzstyle{operation}=[->,>=latex, line width =1pt]
      
      \node[block] (M) at (3.8,0.3) {$M$} ;
      \node[block] (m) at (6,0) {$m$} ;

      \node[block] (ma) at (2,2) {$1$} ;
      \node[block] (mb) at (4,2) {$2$} ;
      \node[block] (mc) at (6,2) {$3$} ;

      \node[aver] (a) at (1.5,4) {$r_1$} ;
      \node[aver] (b) at (4,4) {$r_2$} ;
      \node[aver] (c) at (6,4) {$r_3$} ;

      \node[final] (0) at (0,7) {\tiny Val$=0$} ;
      \node[final] (5) at (4,7) {\tiny Val$=.5$} ;
      \node[final] (1) at (8,7) {\tiny Val$=1$} ;

      \draw[->,line width =1pt] (a) edge [bend left = 25, left] node {$.09$ }(0) ; 
      \draw[->,line width =1pt] (a) edge [bend left = 15, right] node {$.01$ }(1) ;
      \draw[->,line width =1pt, bend right = 80] (a) edge [left] node {$.9$ } (m) ;

      \draw[->,line width =1pt] (b) edge (5) ; 

      \draw[->,line width =1pt] (c) edge [left] node{$.01$} (0) ; 
      \draw[->,line width =1pt] (c) edge [bend right = 20, left] node{$.09$} (1) ;
      \draw[->,line width =1pt] (c) edge[out=0,in=0, right] node{$.9$} (m) ;

      \draw[->,line width =1pt, dashed] (ma) edge (a) ; 
      \draw[->,line width =1pt, dashed] (mb) edge (b) ;
      \draw[->,line width =1pt] (mc) edge (c) ;

      \draw[->,line width =1pt] (M) edge[bend right=15] (ma) ; 
      \draw[->,line width =1pt, dashed] (M) edge[bend left=15] (mb) ;
      
      \draw[->,line width =1pt] (m) edge (M) ; 
      \draw[->,line width =1pt, dashed] (m) edge[bend left=15] (mc) ;
      
      \draw[->,line width =1pt] (ma) edge  (mb) ; 
      \draw[->,line width =1pt, dashed] (mc) edge [bend right = 35] (ma) ;
      \draw[->,line width =1pt] (mc) edge  (mb) ;
      
    \end{tikzpicture}
  \end{center}
\end{subfigure}
\vspace{-3em}
\begin{center}
  \scalebox{1.0}{
    \hspace{-2em}
    \begin{tabular}{| l || c  | c| c| c|}
      \hline
      step & total order & 
                                              forcing strategy for $(M,m)$ & values of \ran-nodes & values of \m-control nodes\\ \hline
      
      0  &  $[r_3r_1r_2]$	& 
                                                  $(r_2,r_3)$  & $ .1, .5, .18$	& $.1,.5, .1$ \\
      1  &  $[r_1r_3r_2]$	& 
                                               $(r_2,r_3)$ &  $.46,  .5, .54$	& $.46,.5,.5$ \\
      2  &  $[r_1r_2r_3]$	& 
                                               $(r_2,M)$ &  $.46, .5, .54$	& $.46,.5,.54$ \\
      \hline
    \end{tabular}
  }
\end{center}

 \caption{\\Fig.~(\ref{fig.a}): example of an SSG taken
   from~\cite{gimbert2008simple}. Node $m$ (resp. $M$) belongs to player \m
   (resp. player \M). Note that we add the dummy \ran-node $r_2$ so that the
   game is in CF.\\
  Fig.~(\ref{fig.b}): modified graph obtained by adding the \m-control nodes $1,
  2, 3$ before each \ran-node.\\
  Fig.~(\ref{fig.c}): solving game $G[t]$ with total order $t=[r_3r_1r_2]$. Arcs
  between control nodes are added according to $t$. Forcing strategies for $m$
  and $M$, and  optimal strategy for nodes $m_i$ are shown with dashed edge.
  Hence, the forcing set of $1$ is $\{1\}$, for $2$ it is $\{2, M  \}$,
  and $\{3, m  \}$ for $3$.\\
  Table:  a run of  Algorithm~\ref{alg:iterative_random_ludwig}. The values of
  the \ran-nodes and the \m-control nodes are given in the order from left to right. 
}
\label{fig.example}
\end{figure}

Here are basic properties on $G[t]$ which should explain why we consider this game.

\begin{lemma} \label{lem:Gt}
  \begin{enumerate}[(i)]
  \item if $G$ has stopping \M-strategy, so does $G[t]$;
  \item optimal values $\val_{*,*}(i)$ of control nodes $i \in [1,k]$ in $G[t]$
    are nondecreasing along $t$;
  \item optimal strategies in $G[t]$ coincide with forcing strategies for order $t$
    on $\VM \cup \Vm$;
  \item the game $G[t]$ can be solved in polynomial time.
  \end{enumerate} 
\end{lemma}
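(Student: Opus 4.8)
The plan is to prove the four items of Lemma~\ref{lem:Gt} more or less in the order listed, since each one feeds the next. For \textbf{(i)}, I would take a stopping $\M$-strategy $\sigma$ in $G$ and observe that the same map is still an $\M$-strategy in $G[t]$ (the $\M$-nodes and their out-arcs are untouched by the construction, except that an arc $(x,r_i)$ is rerouted to $(x,i)$ and then $(i,r_i)$ is forced, so the pebble still reaches $r_i$ in one extra step). Against any $\m$-strategy $\tau'$ in $G[t]$, a play either eventually leaves the control nodes and enters some $r_i$ — after which it behaves exactly as in $G$ under $\sigma$ and the projection of $\tau'$, hence reaches a sink almost surely — or it stays forever among control nodes; but the control nodes form a DAG on $[1,k]$ oriented by $t$ (arcs $(i,j)$ only for $(i,j)\in t$, $i\ne j$), so no infinite play can remain there. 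Hence $\sigma$ is stopping in $G[t]$.

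For \textbf{(ii)} and \textbf{(iii)} I would argue them together. The key structural observation is that a control node $i$ has out-neighbours $\{r_i\}\cup\{j : (i,j)\in t, j\ne i\}$, i.e. it can either ``cash in'' at $r_i$ or pass to a strictly $t$-larger control node. I would use the optimality conditions (Theorem~\ref{th:local_optimality_conditions3}) together with the canonical-form hypothesis (only $\ran$-nodes reach sinks, $\M$ has a stopping strategy, which by (i) passes to $G[t]$): under an optimal pair $(\sigma,\tau)$, $\val(i) = \min(\val(r_i), \min_{(i,j)\in t} \val(j))$. From this recursion, reasoning by decreasing $t$-rank, one gets $\val(i) \le \val(j)$ whenever $(i,j)\in t$ — this is the monotonicity of (ii). For (iii) I would compare $G[t]$ with the deterministic game $G'$ obtained by deleting sinks and $\ran$-nodes and making each control node a sink of value equal to its rank. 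A forcing strategy is by definition an optimal strategy of $G'$; I would show that on $\VM\cup\Vm$ an optimal strategy of $G[t]$ and a forcing strategy make the same choices, by checking that the ordering of the optimal values $\val_{*,*}(i)$ of control nodes in $G[t]$ agrees with the ordering by rank (which is exactly (ii), the values being nondecreasing along $t$), so that ``steering towards the highest-rank reachable control node'' is the same as ``steering towards the highest-value reachable node'' — and the latter is what optimality dictates on $\VM$-nodes, dually on $\Vm$-nodes. This identification of reachable control nodes is precisely what makes the forcing sets $\frc[t](i)$ well-defined, and it is the step I expect to be the most delicate, because one must be careful that ties in the values do not let a strategy ``cheat'' by reaching a control node of the wrong rank; the canonical form and the strictness built into the $t$-order on control nodes are what rule this out.

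For \textbf{(iv)}, the polynomial-time algorithm follows the pipeline already sketched in the text: first compute, in linear time, $t$-forcing strategies $(\sigma_t,\tau_t)$ for both players via deterministic attractors in the auxiliary game $G'$ (as in~\cite{andersson2008deterministic, gimbert2008simple}). By (iii) these are the $\VM\cup\Vm$-parts of optimal strategies of $G[t]$; fixing them collapses $G[t]$ to a system where the only remaining choices are trivial, and the optimal values are then obtained by solving a single linear system — the usual value equations $\val(x) = \sum_y p(x,y)\val(y)$ on $\ran$-nodes, $\val(i) = \val$ of the forced successor on control nodes, $\val(x) = \val(\sigma_t(x))$ resp. $\val(\tau_t(x))$ on $\VM,\Vm$-nodes — which, because $\sigma_t$ is stopping by (i), has a unique solution computable in polynomial time by Gaussian elimination. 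One must then verify a posteriori, using Theorem~\ref{th:local_optimality_conditions3}, that the resulting value vector indeed satisfies the local optimality conditions at $\M$- and $\m$-nodes, i.e. that $(\sigma_t,\tau_t)$ are genuinely optimal and not merely forcing-optimal — but this is exactly the content of (iii), so no extra work is needed. The only genuine obstacle in the whole lemma is the tie-breaking argument in (ii)–(iii); everything else is bookkeeping on the construction of $G[t]$ and a standard linear-algebra solve.
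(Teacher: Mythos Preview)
Your overall strategy matches the paper's almost exactly: (i) via the acyclicity of the control-node layer, (ii) via the local optimality condition at a control $\m$-node, (iii) by replacing control nodes by sinks with the optimal values and observing that the resulting deterministic game has the forcing strategies as optimal ones, and (iv) by fixing the forcing strategies on $\VM\cup\Vm$ and solving what remains. Your worry about ties in (iii) is harmless but slightly misplaced: the direction actually needed (and the one the paper uses for (iv)) is only that every forcing strategy is optimal in $G[t]$, and this follows immediately from (ii) since ``highest rank reachable'' implies ``highest value reachable'' when values are nondecreasing along $t$; ties can only create \emph{extra} optimal strategies that are not forcing, which does no harm.

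There is, however, a genuine slip in your (iv). After fixing $(\sigma_t,\tau_t)$ on $\VM\cup\Vm$, the remaining choices are \emph{not} trivial: the control nodes $i\in[1,k]$ are $\m$-nodes whose strategy (go to $r_i$ or to some $t$-larger $j$) is precisely what still has to be determined, and forcing strategies say nothing about them since in $G'$ they were turned into sinks. So you cannot write ``$\val(i)=\val$ of the forced successor'' and solve a single linear system; you must solve a one-player ($\m$-only) SSG on the control nodes. The paper does this via linear programming (as in~\cite{condon1992complexity}), which is polynomial. Once you replace your Gaussian-elimination step by ``solve the residual one-player SSG by LP'', your argument is complete and coincides with the paper's.
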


\begin{proof}
  To see why $(i)$ is true, just note that since $t$ is an antisymmetric
  relation, this does not create new cycles among $\m$-nodes.

  Suppose now that $(i,j) \in t$. By optimality for the $\m$ player, and since $G[t]$ is
  stopping,
  $\val_{*,*}(i)$ is the minimum value of $\val_{*,*}(x)$ for all outneighbours
  $x$ of $i$ (see Th.~\ref{th:local_optimality_conditions3}).
  Since $j$ is an outneighbour of $i$ in $G[t]$, we have 
  $\val_{*,*}(i) \leq \val_{*,*}(j)$. Hence $(ii)$ is true.

  Now, consider replacing in $G[t]$ every control node $i \in [1,k]$
  by a new sink $s_i$ with value $\val_{*,*}(i)$. Clearly the values of this new game remain the
  same. But, by construction of $G[t]$, random nodes have no incoming
  arcs and they could be as well removed without changing the optimal
  values on $\VM \cup \Vm$. By reducing the game in this way, we get a
  deterministic game whose optimal values on $\VM \cup \Vm$ are the same as those
  of $G[t]$. By definition, optimal strategies of this game are $t$-forcing
  strategies, hence $(iii)$ is true.

  Finally, to solve $G[t]$ we can choose a couple $(\sigma_t,\tau_t)$ of
  $t$-forcing strategies and search for optimal strategies in $G[t]$
  that match with $(\sigma_t,\tau_t)$ on $\VM \cup \Vm$. Hence,
  the strategy of all $\M$-nodes is fixed, and
  only $\m$-strategies on control nodes are computed by solving a one player SSG. It can be done in
  polynomial time by linear programming (see \cite{condon1992complexity}).
\end{proof}

As explained in the proof above, to solve $G[t]$, it is enough to 
compute $t$-forcing strategies on $\VM \cup \Vm$, which can be done in linear time,
and then to solve a one player SSG with only $O(k)$ nodes.

\subsection{Value intervals and pivot} \label{sub:pivot}

In what follows, we write $\val[t]$ for the vector of optimal values of $G[t]$.

\begin{definition}[Constrained control node] \label{def:constrained1}
  We say that a control node $i \in [1,k]$ is {\it constrained} in $G[t]$
  if $\val[t](i) < \val[t](r_i)$.
\end{definition} 

Constrained control nodes are similar to switchable nodes in SSG.
In fact, we can characterize optimality of an order by the absence of constrained node as follows.

\begin{lemma}[Optimal order]\label{lem:opt_t}
  Let $t \in \total(t)$. The game $G[t]$ does not have any constrained control nodes if and only if the forcing strategies $(\sigma_t, \tau_t)$ are optimal strategies for $G$. In this case we say that $t$ is an optimal order for $G$.
\end{lemma}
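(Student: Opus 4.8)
The plan is to relate the optimal values of $G[t]$ to the optimal values of $G$ by an explicit comparison of strategies, using the optimality conditions of Theorem~\ref{th:local_optimality_conditions3} together with the structural properties of $G[t]$ established in Lemma~\ref{lem:Gt}. Throughout, recall that by Lemma~\ref{lem:Gt}$(iii)$ the optimal strategies of $G[t]$ restrict, on $\VM \cup \Vm$, to the forcing strategies $(\sigma_t,\tau_t)$, and that the control nodes $i$ behave, from the point of view of the game, like intermediate $\m$-nodes sitting in front of $r_i$.

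\emph{First direction: no constrained node $\Rightarrow$ $(\sigma_t,\tau_t)$ optimal for $G$.} Suppose $G[t]$ has no constrained control node, so $\val[t](i) = \val[t](r_i)$ for every $i$ (the inequality $\val[t](i)\le\val[t](r_i)$ always holds since $r_i$ is the unique non-control outneighbour of $i$, and $i$ is an $\m$-node, so by Theorem~\ref{th:local_optimality_conditions3} applied in $G[t]$ it is a minimum over outneighbours). I would then argue that the restriction of $\val[t]$ to $V$ (the original nodes), together with $(\sigma_t,\tau_t)$, satisfies the local optimality conditions of Theorem~\ref{th:local_optimality_conditions3} in $G$. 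For an $\M$-node or $\m$-node $x$ of $G$, its outneighbours in $G$ that were of the form $r_i$ have been replaced by $i$ in $G[t]$, but since $\val[t](i)=\val[t](r_i)$ the max (resp. min) over outneighbours is unchanged; for $\ran$-nodes and sinks the local condition is automatic (values propagate identically, as $G$ and $G[t]$ agree on $\Vr\cup\Vs$ and on the probability distributions). Hence $\val[t]$ restricted to $V$ equals $\val_{\sigma_t,\tau_t}$ on $V$ and satisfies the optimality equations, so by Theorem~\ref{th:local_optimality_conditions3} (using that $\sigma_t$ is stopping in $G$, which follows from canonical form and the fact that $\sigma_t$ being stopping in $G[t]$ forces it stopping in $G$ since control nodes do not create new cycles) the pair $(\sigma_t,\tau_t)$ is optimal for $G$.

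\emph{Second direction: $(\sigma_t,\tau_t)$ optimal for $G$ $\Rightarrow$ no constrained node.} Conversely, assume $(\sigma_t,\tau_t)$ are optimal strategies for $G$, so $\val_{\sigma_t,\tau_t}=\val_{*,*}$ on $V$. I would compute the values in $G[t]$ directly: since the forcing strategies fix the behaviour on $\VM\cup\Vm$ and the only remaining freedom is $\m$'s choice at control nodes, and since by construction a control node $i$ can only reach $r_i$ directly or route to a \emph{higher} control node $j$ (with $\val_{*,*}(r_j)$ at least $\val_{*,*}(r_i)$ along $t$ — this is exactly Lemma~\ref{lem:Gt}$(ii)$ combined with optimality of the order's consistency), the minimizing choice at $i$ will in fact yield $\val[t](i)=\val[t](r_i)$. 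More carefully: one shows $\val[t](r_i)=\val_{*,*}(r_i)$ for each $i$ by noting that with the forcing strategies the subgame on $\Vr\cup\Vs$ reached from $r_i$ is identical to that in $G$, and then $\val[t](i)=\min(\val[t](r_i),\min_{(i,j)\in t}\val[t](j))=\val[t](r_i)$ because all the $\val[t](j)$ for $(i,j)\in t$ are $\ge \val[t](r_i)$, which again follows from Lemma~\ref{lem:Gt}$(ii)$. So no $i$ is constrained.

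\emph{Main obstacle.} The delicate point is not the optimality equations themselves but making rigorous the identification of values across the two games — in particular, checking that the forcing strategies $(\sigma_t,\tau_t)$, which are defined via the \emph{deterministic} auxiliary game $G'$ (reaching control nodes, ranked), actually coincide with genuinely optimal $G[t]$-strategies on $\VM\cup\Vm$ and are stopping in $G$, and that $\val[t]$ restricted to $V$ is a legitimate value vector for the strategy pair in $G$. This is precisely where Lemma~\ref{lem:Gt}$(i)$–$(iii)$ does the work, so the proof will mostly consist of invoking those items and then a short, careful bookkeeping of which outneighbours of which nodes were rerouted through control nodes, concluding with one application of Theorem~\ref{th:local_optimality_conditions3} in each direction.
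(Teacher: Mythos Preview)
Your first direction is correct and matches the paper's argument: once every control node satisfies $\val[t](i)=\val[t](r_i)$, merging each control node $i$ into $r_i$ recovers $G$ with values unchanged, and the local optimality conditions of Theorem~\ref{th:local_optimality_conditions3} carry over because every arc $(x,r_i)$ that was rerouted through $i$ sees the same value again. The remark about $\sigma_t$ being stopping in $G$ thanks to canonical form is also exactly what the paper uses.

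In the second direction, however, your argument is circular. You want $\val[t](i)=\val[t](r_i)$, and for this you need $\val[t](j)\geq \val[t](r_i)$ for every $(i,j)\in t$; you attribute this to Lemma~\ref{lem:Gt}$(ii)$. But that lemma only gives $\val[t](j)\geq \val[t](i)$, and since $i$ is a $\m$-node with $r_i$ among its outneighbours one always has $\val[t](i)\leq \val[t](r_i)$, so the inequality points the wrong way: from $\val[t](i)\leq\val[t](j)$ and $\val[t](i)\leq\val[t](r_i)$ nothing follows about $\val[t](r_i)$ versus $\val[t](j)$. Likewise, the claim ``one shows $\val[t](r_i)=\val_{*,*}(r_i)$ by noting that with the forcing strategies the subgame on $\Vr\cup\Vs$ reached from $r_i$ is identical to that in $G$'' is not a valid justification: from $r_i$ in $G[t]$ play may re-enter $\VM\cup\Vm$ and then hit another control node, whose optimal choice in $G[t]$ is precisely what is in question.

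The paper organises this direction differently: it first asserts that optimality of $(\sigma_t,\tau_t)$ in $G$ forces the \emph{original} values $\val_{*,*}(r_i)$ in $G$ to be nondecreasing along $t$, and only then extends the strategies to $G[t]$ by $i\mapsto r_i$ and checks the $\m$-optimality condition at each control node directly from this monotonicity. That monotonicity of the $\val_{*,*}(r_i)$ along $t$ is the actual missing ingredient in your argument; it cannot be read off Lemma~\ref{lem:Gt}$(ii)$ (which holds for \emph{every} order $t$ and nowhere uses the hypothesis that $(\sigma_t,\tau_t)$ are optimal in $G$), so you need to supply a separate reason for it coming from the optimality assumption on $(\sigma_t,\tau_t)$.
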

\begin{proof}
  First note that since $G$ is in CF, $\sigma_t$ is always stopping.
  
  If $G[t]$ does not have any constrained control nodes, then optimal
  strategies are the forcing strategies $(\sigma_t, \tau_t)$ on
  $\VM \cup \Vm$, together with the choice $(i, r_i)$ for each control
  node $i \in [1,k]$. Then, by merging the control nodes with their
  associated random node while removing the unused arcs between the
  control nodes (hence recovering the initial game G), the values on
  the remaining nodes are kept, and so are the optimality conditions
  of Th.~\ref{th:local_optimality_conditions3}.

  If $(\sigma_t, \tau_t)$ are optimal strategies for $G$, then the
  values $v_1,v_2,\dots, v_k$ of the \ran-nodes are nondecreasing
  along order $t$. Hence, by turning $G$ into $G[t]$ and extending
  strategies $(\sigma_t, \tau_t)$ with the choice $(i, r_i)$ for each
  control node $i \in [1,k]$, we will obtain values that satisfy
  optimality conditions and such that $Val[t](i) = Val[t](r_i)$,
  showing that $i$ is not constrained.
\end{proof}

We define the {\it value interval} of a control node $i \in [1,k]$ as the set of $j \in [1,k]$
that share the same optimal value in $G[t]$, i.e. $\val[t](i) =
\val[t](j)$.
This set is indeed an interval in order $t$ by $(ii)$ of Lemma~\ref{lem:Gt}, i.e. its elements are consecutive in order $t$.

\begin{definition} \label{def:pivot}
  The {\it pivot} operation on a control node $i \in [1,k]$ for the order $t$ is the transformation
  of $t$ into a new order $t' \in \total(k)$, obtained by moving $i$ just after
  the end of its value interval in $t$. 
\end{definition} 

Note that if $i$ is the last node of its value interval, then the pivot
operation
does nothing. Also note that if $i$ is constrained, it cannot be the last
node of its value interval (we shall only pivot
on constrained control nodes).

\textbf{Example.} Let $k=7$ and let $t$ be in ascending order $[7,2,4,1,3,6,5]$.
Suppose that the values of control nodes are, in this order $[0.2, 0.2, 0.3, 0.3, 0.3, 0.4, 0.4]$.
The value intervals are $[7,2]$, $[4,1,3]$ and  $[6,5]$. The pivot operation on $4$ places $4$ after
$3$, 
so that the obtained order would be $[7,2,1,3,4,6,5]$. 

The following theorem shows that the pivot operation increases the value vector, 
which will enable us to design a strategy improvement algorithm on the forcing
strategies (where the improvement is on $\val[t]$ rather than on values in the original
game $G$).
A similar theorem is proved in~\cite{dai2009new} to build a different strategy improvement algorithm.

\begin{theorem} \label{thm:pivot}
  Let $t \in \total(t)$ and $i \in [1,k]$ be a constrained control node.
  If $t'$ is obtained from $t \in \total(k)$ by pivoting on $i$, then $\val[t'] > \val[t]$.
\end{theorem}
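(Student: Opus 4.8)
The plan is to compare the game $G[t]$ with the game $G[t']$ by exhibiting, in $G[t']$, a pair of strategies whose values dominate $\val[t]$ and in fact strictly improve it, and then invoke the fact that optimal values of $G[t']$ dominate the values of any pair of strategies. Concretely, I would start from a pair $(\sigma_t, \tau_t)$ of $t$-forcing strategies together with the control-node choices $(j, r_j)$ for $j \in [1,k]$; this pair realizes $\val[t]$ in $G[t]$ by $(iii)$ of Lemma~\ref{lem:Gt}. The arcs of $G[t]$ and $G[t']$ differ only in the arcs between control nodes, and these differ only in the position of $i$: in $t'$, node $i$ has lost its arcs to the other nodes of its value interval and has gained arcs only to nodes that were already above the value interval (which in $t$ all have value $\geq \val[t](i)$, with strict inequality since $i$ is constrained means $\val[t](r_i) > \val[t](i)$, and the interval end is followed by strictly larger values). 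The key point to extract is that the old outneighbours of $i$ in $G[t]$ that survive in $G[t']$, plus the new ones, all have $\val[t]$-value at least $\val[t](i)$; moreover at least one new option, namely $r_i$ itself — which $i$ still points to — has value strictly greater, so after re-solving, $i$ can strictly improve.

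The cleanest route is to use Theorem~\ref{th:switch_augmentation2}. I would take $\sigma = \sigma'$ to be the common $\M$-strategy (the $t$-forcing $\M$-strategy, which is stopping since $G$ is in CF, as noted in Lemma~\ref{lem:opt_t}), and $\tau$ to be the $G[t]$-optimal $\m$-strategy extended with $(j,r_j)$ on control nodes; then I would define $\tau'$ to agree with $\tau$ everywhere except possibly at control nodes, where I need a valid strategy in $G[t']$. The subtlety is that $\tau$ as given may use an arc $(i, \text{something in the value interval})$ that no longer exists in $G[t']$; but since the whole value interval has the same $\val[t]$-value, I can first modify $\tau$ (without changing any value) so that every control node in the value interval of $i$ points to the last node of that interval, call it $\ell$, and $\ell$ points to $r_\ell$ or wherever it did. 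After this normalization $i$ points to $\ell$, an arc which still exists in $G[t']$ only if $(i,\ell)\in t'$ — and indeed $\ell$ is exactly where $i$ gets moved just after, so $(i,\ell)\in t'$. Thus $\tau'=\tau$ is already a legal $\m$-strategy in $G[t']$. Now I check the hypotheses of Theorem~\ref{th:switch_augmentation2} with respect to the game $G[t']$: for every $\M$-node the inequality is an equality ($\sigma'=\sigma$); for every $\m$-node $x$, I need $\val_{\sigma,\tau}^{G[t']}(\tau'(x)) \geq \val_{\sigma,\tau}^{G[t']}(\tau(x))$, which holds with equality. So far this only gives $\val_{\sigma',\tau'}^{G[t']} \geq \val_{\sigma,\tau}^{G[t']}$; to get strictness I instead pick $\tau'$ that differs from $\tau$ at node $i$ by switching its choice to $r_i$ (a legal arc in $G[t']$ since $(i,r_i)$ is always present). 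Then I must verify $\val_{\sigma,\tau}^{G[t']}(r_i) > \val_{\sigma,\tau}^{G[t']}(\tau(x_i))$ for the relevant node; since $\val_{\sigma,\tau}^{G[t']}$ restricted to $\VM\cup\Vm\cup\Vr$ equals $\val[t]$ (same play distribution, same outcomes, as the extra control arcs are not used under $\tau$), this reads $\val[t](r_i) > \val[t](i)$, which is exactly the constraint hypothesis. Hence by Theorem~\ref{th:switch_augmentation2}, $\val_{\sigma',\tau'}^{G[t']} > \val_{\sigma,\tau}^{G[t']} = \val[t]$ (on the common nodes), and since $\val[t']$ is the optimal value vector of $G[t']$ it dominates $\val_{\sigma',\tau'}^{G[t']}$, giving $\val[t'] \geq \val_{\sigma',\tau'}^{G[t']} > \val[t]$.

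The main obstacle I anticipate is bookkeeping the identification of nodes and values across the three games $G[t]$, $G[t']$, and the original $G$: one must be careful that $\val[t]$ and $\val[t']$ are vectors over the shared node set $\VM\cup\Vm\cup\Vr\cup\Vs\cup[1,k]$, that the play measures under $(\sigma,\tau)$ genuinely coincide in $G[t]$ and $G[t']$ (because the only differing arcs, the inter-control arcs, are never taken when control node $j$ plays $(j,r_j)$ — except at $i$ where we re-route, so here one needs the normalization argument to ensure $\tau$ is legal in both), and that Theorem~\ref{th:switch_augmentation2}'s hypothesis ``$\sigma,\sigma'$ stopping'' is met — which follows from $G$ being in CF so that the $t$-forcing $\M$-strategy is stopping, and adding antisymmetric control arcs creates no new cycles among $\m$-nodes (Lemma~\ref{lem:Gt}$(i)$), so the game stays stopping for that $\M$-strategy. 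A secondary technical point is ensuring the value interval of $i$ is genuinely an interval in $t$ (guaranteed by Lemma~\ref{lem:Gt}$(ii)$) so that ``moving $i$ just after the end of its value interval'' is well-defined and the arcs $i$ acquires in $t'$ go only to nodes of value $\geq \val[t](i)$ — needed if one wants the cleaner monotonicity bound rather than the switch theorem, but with the switch-theorem route above it suffices to control the single re-routed arc at $i$.
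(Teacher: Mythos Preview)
Your plan has the right intuition but the final step is a genuine gap. You assert that ``since $\val[t']$ is the optimal value vector of $G[t']$ it dominates $\val_{\sigma',\tau'}^{G[t']}$'', yet this fails in general: from $\val[t']=\val_{*,*}^{G[t']}$ one gets $\val[t']\geq\val_{\sigma',*}^{G[t']}$, while $\val_{\sigma',*}^{G[t']}\leq\val_{\sigma',\tau'}^{G[t']}$ since your $\tau'$ is not a best $\m$-response to $\sigma'$; these two inequalities point in opposite directions and do not compose. Exhibiting \emph{some} pair in $G[t']$ with value exceeding $\val[t]$ therefore does not yield $\val[t']>\val[t]$. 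There are also two smaller slips upstream: the pair ``forcing strategies plus $(j,r_j)$ on every control node'' does \emph{not} realize $\val[t]$ (precisely because $i$ is constrained, the $G[t]$-optimal choice at $i$ is not $r_i$); and in your normalization, after pivoting $i$ sits just \emph{after} $\ell$, so $(\ell,i)\in t'$ rather than $(i,\ell)\in t'$, hence the arc $(i,\ell)$ is absent from $G[t']$ and your normalized $\tau$ is still illegal there.

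The paper sidesteps all of this by introducing an auxiliary game $G[t+t']$ carrying the control arcs of \emph{both} $t$ and $t'$. In this common arena the $G[t]$-optimal pair $(\sigma,\tau)$ and the $G[t']$-optimal pair $(\sigma',\tau')$ are simultaneously legal and realize exactly $\val[t]$ and $\val[t']$ respectively. One then verifies that the passage $(\sigma,\tau)\to(\sigma',\tau')$ satisfies the hypotheses of Theorem~\ref{th:switch_augmentation2} in $G[t+t']$: forcing sets $\frc[t](j)$ for $j$ outside the value interval of $i$ are unchanged, switches on nodes whose forcing set lands inside the interval are between equal-valued nodes, any new control-node choice available in $t'\setminus t$ goes to $i$ (again inside the interval, so value-preserving), and at $i$ itself the switch is strictly increasing since $\tau'(i)$ is either $r_i$ or a node strictly beyond the interval. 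This delivers $\val[t']=\val_{\sigma',\tau'}>\val_{\sigma,\tau}=\val[t]$ directly, with no separate lower bound on $\val[t']$ needed. The union-game device is the missing idea in your argument.
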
 

\begin{proof}
  Consider a new game $G[t+t']$ which is obtained from $G$ like $G[t]$
  and $G[t']$ but with arcs $(i,j)$ for $i \neq j$ between control
  nodes for all $(i,j) \in t \cup t'$.  Let $(\sigma,\tau)$ and
  $(\sigma',\tau')$ be respective optimal strategies in $G[t]$ and
  $G[t']$. We can interpret these strategies as strategies in
  $G[t+t']$. Since the only difference between $G[t]$, $G[t']$ and
  $G[t+t']$ are the arcs between control nodes, all strategies
  $(\sigma,\tau)$ give exactly the same values in $G[t]$ and in
  $G[t+t']$, and a similar observation can be made for $G[t']$. Hence, to
  prove the result, is is enough to show that $\val_{\sigma',\tau'}~>
  \val_{\sigma,\tau}$ in $G[t+t']$. Note that whereas
  $\sigma$ and $\sigma'$ are respective stopping \M-strategies of $G[t]$
  and $G[t']$, they could be not stopping in $G[t+t']$. However it is not
  difficult to see that conclusion of
  Th.~\ref{th:switch_augmentation2} would still apply. 
  Hence it is sufficient to show in $G[t+t']$ that changing
  $(\sigma,\tau)$ into $(\sigma',\tau')$ makes a nondecreasing switch
  on every node, and is increasing in at least one node.

  In the order $t$, let $I = [i, i_1, i_2, \cdots, i_\ell]$, with
  $\ell \geq 1$ be the increasing sequence of consecutive nodes
  sharing the same value as $i$ for $(\sigma,\tau)$ (i.e. the value
  interval of $i$, starting from $i$). Since $i$ is constrained,
  $\ell \geq 1$.

  The pivot operation transforms this part of $t$ into
  $[i_1, i_2, \cdots,i_\ell,i]$, hence the only differences between
  $G[t]$ and $G[t']$ are the $\ell$ arcs $(i,i_c)$ for
  $1 \leq c \leq \ell$ that are inverted into $(i_c,i)$.
  Hence, if $j \not\in I$, it keeps the same position relatively to all other control nodes when
  we change the order $t$ into $t'$, hence $\frc[t](j) =
  \frc[t'](j)$.
  Hence, when we change $(\sigma,\tau)$ to $(\sigma',\tau')$, either
  there is no switch in $\frc[t](j)$, or it is between nodes of the
  same values.


  For nodes $x \in \frc[t](j)$ with $j \in I$, clearly
  $\sigma'(x)$ (resp. $\tau'(x)$) is in some $\frc[t](j')$ with $j' \in I$. Since
  all these nodes also share the same value (by definition of the value interval),
  these switches are also between nodes of same values.

  Suppose that there is a decreasing switch on a control node, i.e. for a $j \in [1,k]$ we have $\val[t](\tau'(j)) <
  \val[t](j)$. In this case $\tau'(j)$ should be stricly before $j$ in $t$ since
  optimal values are increasing along $t$. So we could not have $(j,\sigma'(j))
  \in t$ but should have $(j,\sigma'(j)) \in t'$. The only possibility
  is  $\sigma'(j) = i$ and $j \in I$. Since these nodes are in the same
  value interval, once again this switch is unchanging, a contradiction.

  We showed that no switch from $(\sigma,\tau)$ to $(\sigma',\tau')$ is
  decreasing.
  Now consider the case of $i$ during the pivot operation. 
  Since $i$ is constrained, $\val[t](i) < \val[t](r_i)$.
  Since $\tau'(i)$ can either be equal to $r_i$
  or to some $j$ which is striclty after the value interval of $i$ in order $t$,
  hence has a greater value, we see that the switch at $i$
  must be increasing.
\end{proof}

\subsection{Main algorithm}

Algorithm~\ref{alg:iterative_random_ludwig} consists in iterating on orders $t \in \total(k)$, by picking
randomly a pivotable element in $t$ and updating $t$ by a pivot on $i$, until
we reach an optimal order.

Here is the \textbf{pivot selection rule}. First, prior to the
execution of the algorithm, we choose randomly and uniformly an order
$\Theta$ on the set of all $\frac{k(k-1)}2$ unordered pairs of control
nodes $\{i,j\}$, with $i,j \in [1,k]$.  Then, at each step of the
algorithm, consider the game $G[t]$, and remove one by one the arcs
between control nodes, following order $\Theta$. During this process,
choose as pivot the first constrained control node, if any, which is
disconnected from the following nodes of its value interval.  In more
detail, for a given order $t$, compute $\val[t]$ and then partition
the control nodes into value intervals. Each constrained control node
$i$ has $d(i)$ arcs leading to other control nodes from the same value
interval, where $d(i)$ is its distance in $t$ to the last element of
this interval. Enumerating $\Theta$ in ascending order, the pivot is
the first constrained node $i$ whose $d(i)$ arcs are encountered.

\textbf{Example.} Continued from the previous example with $k=7$ and value intervals $[7, 2]$, $[4, 1,
3]$ and $[6, 5]$. Suppose that the order $\Theta$ starts
$  \{2,5\} ,\{7,6\} ,  \{1,4\}, \{ 2,7 \}\dots.$
The first element that is disconnected from its value interval is $7$ which is the one we choose as a pivot leading to order $[2,7,4,1,3,6,5]$.

\begin{algorithm}
\DontPrintSemicolon
  \SetKwInOut{Input}{input}\SetKwInOut{Output}{output}
  \Input{$G$ SSG, initial total order $t \in \total(k)$.}
  \Output{optimal $\M$ and $\m$-strategies.}
    \BlankLine
    $\cdot$  pick randomly and uniformly a total order $\Theta$ on all
    pairs of control nodes  $\{i,j\}$\;
    $\cdot$ compute values in $G[t]$ (poly. time, see Lemma \ref{lem:Gt})\;
    \While{$t$ is not optimal (Lemma \ref{lem:opt_t})}{
      $\cdot$ choose a pivot $i$ by the pivot selection rule \;
      $\cdot$ update $t$ by pivoting on $i$ as in Def. \ref{def:pivot}\;
    $\cdot$ compute values in $G[t]$ (poly. time, see Lemma \ref{lem:Gt})\;
    }
    \Return{a couple of forcing strategies for order $t$ in $G$.}
  \caption{Iterative version for Bland's rule on random nodes} \label{alg:iterative_random_ludwig}
\end{algorithm}

By Th.~\ref{thm:pivot}, no order $t \in \total(k)$ is repeated during the
execution of Algorithm \ref{alg:iterative_random_ludwig}; since $\total(k)$ is finite, the algorithm reaches in a finite number of steps an order $t^* \in \total(k)$ which has no
constrained node, i.e. which is optimal by Lemma \ref{lem:opt_t}.
Hence, Algorithm \ref{alg:iterative_random_ludwig} computes 
optimal strategies for $G$ in at most $k!$ steps. However, we claim the following result, which will be proved in the next section.

\begin{theorem} \label{thm:mainthm}
  Alg.~\ref{alg:iterative_random_ludwig} computes optimal strategies for $G$
  in at most $e^{\sqrt2 \cdot k}$ expected steps. 
\end{theorem}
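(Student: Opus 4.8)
The plan is to mimic exactly the recursive analysis of Ludwig's algorithm from Section~\ref{sec:ludwig}, but with orders $t \in \total(k)$ playing the role of strategies, pairs of control nodes $\{i,j\}$ playing the role of $\M$-nodes, and the value vector $\val[t]$ playing the role of $\val_{\sigma,*}$. First I would set up an auxiliary recursive formulation of Alg.~\ref{alg:iterative_random_ludwig}: instead of choosing a pivot among all constrained control nodes disconnected from their value interval, one fixes a pair $\{i,j\}$ (the last one in $\Theta$, say), and splits the computation according to whether the arc between $i$ and $j$ is "present or absent" in the optimal order. Concretely, for a set $F$ of pairs of control nodes whose relative order is frozen, define $\opt(t,F)$ to be the best order (maximizing $\val[\cdot]$) among all orders agreeing with $t$ on $F$; this is well-defined and computable because freezing the relative order of a pair only constrains the search over the remaining $O(k)$-node one-player game used in Lemma~\ref{lem:Gt}. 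Then Alg.~\ref{alg:iterative_random_ludwig} restricted to $F$ is, by the pivot selection rule, exactly a Bland-style iteration that never alters the frozen pairs.

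Next I would prove the analogue of Lemma~\ref{ludwig-lemme1}: if $\{i_0,j_0\}$ is the last unfrozen pair in $\Theta$ and $B$ is the set of unfrozen pairs $\{i,j\}$ such that $\opt(t_0, F\cup\{\{i,j\}\}) \not> \opt(t_0, F\cup\{\{i_0,j_0\}\})$, then
\[
f^\Theta(t_0,F) \le f^\Theta(t_0, F\cup\{\{i_0,j_0\}\}) + 1 + f^\Theta(t_2, F\cup B),
\]
where $t_2$ is $\opt(t_0, F\cup\{\{i_0,j_0\}\})$ after the single pivot that flips the pair $\{i_0,j_0\}$. The argument is the same transitivity argument as in Ludwig's lemma, using Theorem~\ref{thm:pivot} (monotonicity of the pivot) in place of Theorem~\ref{th:switch_augmentation}: once the pair $\{i_0,j_0\}$ has been flipped the value vector strictly exceeds $\opt(t_0,F\cup\{\{i_0,j_0\}\})$, hence also exceeds $\opt(t_0,F\cup\{\{i,j\}\})$ for every $\{i,j\}\in B$, so no order reached in the second phase can agree with $t_0$ on $\{i,j\}$, which means such a pair was flipped in the first phase and is never flipped again.

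Then, setting $\Phi(m)$ to be the supremum of $\espsb{\Theta}{}{f^\Theta(t,\emptyset)}$ over all SSGs and all orders on $m$ "unfrozen pairs" (more precisely, carrying a two-parameter $\Phi(N,k')$ with $N$ the total number of pairs $\binom{k}{2}$ and $k'$ the frozen count, exactly as in the proof of Lemma~\ref{lem:ineq1}), I would invoke Lemma~\ref{lem:poset1} on the poset of orders $\{\opt(t,F\cup\{\{i,j\}\})\}$ to bound $\Proba(|B|\le i)\le i/(N-k')$ since the last unfrozen pair is chosen uniformly. This yields the recurrence of Lemma~\ref{lem:ineq1} with $n$ replaced by $N = \binom{k}{2} \le k^2/2$, hence by Lemma~\ref{lem:borneF} we get $\Phi \le e^{2\sqrt{N}} \le e^{2\sqrt{k^2/2}} = e^{\sqrt2\,k}$, which is precisely the stated bound.

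The main obstacle I expect is not the counting recurrence — which is verbatim Ludwig — but verifying that Alg.~\ref{alg:iterative_random_ludwig} genuinely has the claimed recursive structure over pairs of control nodes. One must check carefully that "freezing the relative order of a pair $\{i,j\}$" is a coherent notion (the frozen set $F$ must always be extendable to a total order, i.e. it stays acyclic), that the pivot selection rule — phrased in the paper in terms of removing arcs of $G[t]$ along $\Theta$ — is the correct Bland's-rule instantiation of "pick the last unfrozen pair and recurse", and that a single pivot operation as in Def.~\ref{def:pivot} corresponds to flipping exactly one unfrozen pair in the appropriate subproblem (a pivot moves $i$ past its whole value interval, so one has to argue that in the subproblem where all but the relevant pair are frozen this collapses to a single flip). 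Once this dictionary between the two settings is pinned down, the bound follows mechanically from Lemmas~\ref{lem:poset1}, \ref{lem:ineq1} and~\ref{lem:borneF}.
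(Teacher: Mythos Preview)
Your proposal is correct and follows essentially the same route as the paper: the paper's Section~\ref{sec:analysis} introduces pretotal orders $p \in \pretotal(k)$ (your ``frozen pairs'' $F$), gives a recursive reformulation (Alg.~\ref{alg:recursive_random_ludwig}) over pairs $\{i,j\}$, proves the analogue of Lemma~\ref{ludwig-lemme1} (Lemma~\ref{lem:fucktrump}), and then invokes Lemmas~\ref{lem:poset1}, \ref{lem:ineq1}, \ref{lem:borneF} with $m = \binom{k}{2}$ to get $e^{2\sqrt{k(k-1)/2}} \le e^{\sqrt{2}\,k}$. The obstacles you anticipate---that the pivot selection rule realises Bland's rule on pairs, that $p_0 + (j,i)$ remains a valid pretotal order after the pivot, and that the iterative and recursive formulations produce the same sequence of total orders---are exactly the points the paper addresses via the auxiliary games $G[p]$ and Lemmas~\ref{lem:carac_optimal_order} and~\ref{lem:property_optimal_order}.
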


Note that for $k$ large enough we have $e^{\sqrt2 \cdot k} < k!$, whose growth
is roughly equivalent to $2^{k \log k}$. Moreover, the algorithm uses $O(k^2 \log k)$ random bits to choose the order $\Theta$ on pairs.

\section{Analysis of Algorithm \ref{alg:iterative_random_ludwig}} \label{sec:analysis}

In this section we prove Theorem \ref{thm:mainthm}.  To do this we
shall reformulate Algorithm \ref{alg:iterative_random_ludwig} as a
recursive algorithm, but we need additional notions for this.  The
recursive formulation also reveals the nature of the algorithm: it
computes an optimal order on  control nodes by finding the right
order between each pair of these nodes using dichotomy. This
allows the same analysis as for Ludwig's Algorithm and its variants.

\subsection{Modified game $G[p]$ for a pretotal order $p$}

If $p \in \pretotal(k)$ is a pretotal order, we define $G[p]$ exactly as was defined $G[t]$ for a
total order $t \in \total(k)$ in section \ref{sub:modified}.
The only difference is that, since $p$ is not total, a control node
$i \in [1,k]$ only has arcs to those $j \neq i \in [1,k]$ such that
$(i,j) \in p$. 

To simplify notation, for any node $x$ in $G[p]$, define $\val_*[p](x) := \val_{*,*}^{G[p]}(x)$ as the optimal value of $x$ in $G[p]$.
We can now directly extend some of the observations of Lemma \ref{lem:Gt} to pretotal orders.

\begin{lemma} \label{lem:mimj}
  If $p \in \pretotal(k)$, then optimal values of control nodes $i \in [1,k]$ in $G[p]$
    are nondecreasing in order $p$, i.e. if $(i,j) \in p$
    then $\val_*[p](i) \leq \val_*[p](j)$.
\end{lemma}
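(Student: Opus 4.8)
The plan is to mirror the proof of part $(ii)$ of Lemma~\ref{lem:Gt}, which handled the case of a total order, and observe that nothing in that argument really used totality of the order beyond the existence of the arc $(i,j)$ between the two control nodes in question. So first I would fix $p \in \pretotal(k)$ and a pair $(i,j) \in p$ with $i \neq j$, and recall that by construction of $G[p]$ the control node $i$ then has an outgoing arc $(i,j)$.

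The key step is to invoke the optimality conditions of Theorem~\ref{th:local_optimality_conditions3} at the $\m$-node $i$. For this I need $G[p]$ to be stopping — or at least that $\M$ has a stopping strategy, so that the optimality characterization applies. This follows exactly as in part $(i)$ of Lemma~\ref{lem:Gt}: since $p$ is antisymmetric, adding the arcs dictated by $p$ among the control nodes creates no new cycle among $\m$-nodes, so a stopping $\M$-strategy of $G$ lifts to one of $G[p]$ (note that $G$, being in canonical form, has a stopping $\M$-strategy). Given this, Theorem~\ref{th:local_optimality_conditions3} applied at the $\m$-node $i$ gives
\[
  \val_*[p](i) = \min_{(i,y) \in A} \val_*[p](y) \leq \val_*[p](j),
\]
the inequality being valid precisely because $j$ is one of the outneighbours of $i$ in $G[p]$. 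This establishes the claim for every pair $(i,j) \in p$, which is exactly the statement.

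The only point requiring a little care — and the thing I would flag as the main (minor) obstacle — is making sure the stopping hypothesis needed to apply Theorem~\ref{th:local_optimality_conditions3} is genuinely available for a pretotal, rather than total, order; but as noted this is immediate from antisymmetry of $p$ together with the canonical form assumption on $G$, just as in Lemma~\ref{lem:Gt}$(i)$. Beyond that, the proof is a direct transcription of the total-order case, so I would keep it to two or three sentences.
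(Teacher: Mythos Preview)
Your approach is exactly what the paper intends: it states Lemma~\ref{lem:mimj} without proof as a ``direct extension'' of Lemma~\ref{lem:Gt}(ii), and your argument is precisely that extension via the $\m$-optimality condition at the control node $i$.

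One small point deserves correction. Your justification that ``since $p$ is antisymmetric, adding the arcs dictated by $p$ among the control nodes creates no new cycle among $\m$-nodes'' is not valid for arbitrary pretotal orders: antisymmetry forbids only $2$-cycles, and a pretotal order in the paper's sense (reflexive and antisymmetric, \emph{not} transitive) may well contain, say, $(1,2),(2,3),(3,1)$. In that situation $\m$ can loop indefinitely among control nodes, no $\M$-strategy in $G[p]$ is stopping, and Theorem~\ref{th:local_optimality_conditions3} does not apply as stated. The fix is easy: throughout Section~\ref{sec:analysis} every pretotal order $p$ that actually occurs satisfies $p \subset t$ for some total order $t$ (this is explicit in the input of Algorithm~\ref{alg:recursive_random_ludwig} and in Definition~\ref{def:partialvalues}), and any such $p$ is necessarily acyclic, so the stopping argument from Lemma~\ref{lem:Gt}(i) does carry over. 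With that caveat your proof is complete, and the paper itself is equally informal on this point.
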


In order to solve $G[p]$, the algorithm will recursively compute an
optimal total ordering of control nodes $i \in [1,k]$ extending $p$. 
Thus, for all total orders $t \in \total(k)$ extending $p \in \pretotal(k)$,
we need to assign a  value in $G[p]$, which we denote $\val[p](t)$. Here is how
we define it.

\begin{definition}
\label{def:partialvalues}
  Let $t \in \total(k)$ extending $p \in \pretotal(k)$. The values
  $\val[p](t)$ associated to $t$ in $G[p]$ are the values
  $\val_{\sigma_t,\tau_t}$ where $\sigma_t$ and $\tau_t$ satisfy:
  \begin{enumerate}[(i)]
  \item $\sigma_t$ and $\tau_t$ are forcing strategies for $G[t]$;
  \item $\tau_t$ statisfies the \m-optimality conditions (Thm. \ref{th:local_optimality_conditions3}) on every control node $i \in [1,k]$.
  \end{enumerate} 


\end{definition} 



As a summary, $\val_*[p]$ is the vector of optimal values of game $G[p]$ while
$\val[p](t)$ is the vector of optimal values of $G[p]$ when the strategies in
$\Vm$ and $\VM$ are forcing strategies. It follows that $\val[p](t) \leq
\val_*[p]$. Recall that $\val[t]$ is the vector of optimal values of $G[t]$. Then we have $\val[t] = \val_*[t] = \val[t](t)$.

%
%


\begin{definition}[Optimal order]
  Let $p \in \pretotal(k)$ and $t \in \total(k)$ extending $p$
  ($ p \subset t$). We say that $t$ is an optimal total order for $p$
  if $\val_*[p] = \val[p](t)$.
\end{definition}

The next lemma proves the existence and gives a characterization of optimal
orders.

\begin{lemma}
  \label{lem:carac_optimal_order}
  Suppose $G$ is in CF and let $p\in \pretotal(k), t \in \total(k), p \subset
  t$. Then the following conditions are equivalent:
  \begin{enumerate}[(i)]
  \item $t$ is an optimal order for $p$;
  \item $t$ is a nondecreasing ordering of the values $\val_*[p](i)$ for $i \in [1,k]$;
  \item $\val_*[p] = \val[t]$.
  \end{enumerate}
\end{lemma}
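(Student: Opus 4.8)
The plan is to prove the cycle of implications $(i) \Rightarrow (ii) \Rightarrow (iii) \Rightarrow (i)$. Throughout, write $v_i = \val_*[p](i)$ for the optimal value of control node $i$ in $G[p]$, and recall from Lemma~\ref{lem:mimj} that these values are nondecreasing along $p$, so any nondecreasing ordering of the $v_i$ extending $p$ exists; pick such a $t$ whenever needed.

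For $(ii)\Rightarrow(iii)$: suppose $t$ is a nondecreasing ordering of the $v_i$. I would build, in the game $G[t]$, a candidate pair of optimal strategies by taking a couple of $t$-forcing strategies $(\sigma_t,\tau_t)$ on $\VM\cup\Vm$ together with the choice $(i,r_i)$ at every control node, and check the optimality conditions of Th.~\ref{th:local_optimality_conditions3}. The value of control node $i$ under these strategies is $\val_{*,*}^{G[p]}(r_i)$-type quantities inherited from the forcing structure; the key point is that because $t$ lists the $v_i$ in nondecreasing order, the forcing sets and forcing strategies computed for $t$ agree with those that witness $\val_*[p]$, so the resulting value vector satisfies the $\M$- and $\m$-local optimality conditions everywhere, including at control nodes where $\val[t](i)=v_i=\val[t](r_i)$ (this is exactly the "not constrained" situation of Lemma~\ref{lem:opt_t}, transposed to $G[p]$). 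Since $G$ is in CF, $\sigma_t$ is stopping, so Th.~\ref{th:local_optimality_conditions3} applies and these are optimal strategies, giving $\val[t] = \val_*[p]$, i.e. $(iii)$; along the way this also shows $\val_*[p] = \val[p](t)$, so $(i)$ holds too, which already covers $(iii)\Rightarrow(i)$ once we note $\val[t]=\val[p](t)$ always (the forcing strategies for $t$ restricted appropriately are the same object in $G[t]$ and in $G[p]$ once $\tau_t$ is completed to satisfy $\m$-optimality on control nodes, by Definition~\ref{def:partialvalues}).

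For $(i)\Rightarrow(ii)$, the contrapositive is cleanest: if $t$ is not a nondecreasing ordering of the $v_i$, there is a pair $(i,j)\in t$ with $v_i > v_j$; I would argue that then the forcing strategies for $t$ cannot realize the optimal values of $G[p]$, because the $\m$-player, constrained by $t$ to route from control node $i$ only to control nodes appearing later in $t$ (hence with forced value labels respecting $t$), is strictly worse off than in $G[p]$ where she could have exploited the arc structure of $p$ to reach the lower value $v_j$ from $i$; concretely $\val[p](t)(i) \geq$ (rank-based forced value) $> v_i$-witnessing routing is blocked, giving a strict inequality somewhere and hence $\val[p](t) \neq \val_*[p]$, so $t$ is not optimal for $p$. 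I expect the main obstacle to be making this last step fully rigorous: one must carefully relate the forced (rank) values used to define $t$-forcing strategies to the genuine optimal values $v_i$, and show that a "descent" $(i,j)\in t$ with $v_i>v_j$ forces a strict loss rather than being absorbed harmlessly — this is essentially the same phenomenon as a constrained control node in Lemma~\ref{lem:opt_t} and Theorem~\ref{thm:pivot}, and I would try to reuse the switch-based argument of Theorem~\ref{thm:pivot} (exhibiting a profitable switch away from the forcing strategies in $G[p]$) rather than re-deriving everything from scratch.
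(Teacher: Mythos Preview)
Your step $(iii)\Rightarrow(i)$ contains a genuine error: you claim $\val[t]=\val[p](t)$ always, but this is false. In $G[t]$ the $\m$-player at a control node $i$ has all arcs $(i,j)\in t$, whereas in $G[p]$ she has only those in $p\subset t$; under the same forcing strategies on $\VM\cup\Vm$, fewer options for $\m$ can only raise values, so in general only $\val[t]\le\val[p](t)$ holds. The correct argument is the sandwich $\val[t]\le\val[p](t)\le\val_*[p]$: equality of the outer terms (which is $(iii)$) forces $\val[p](t)=\val_*[p]$, i.e.\ $(i)$. Your route for $(ii)\Rightarrow(iii)$ is also more roundabout than necessary. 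Rather than building candidate strategies from $t$-forcing data and then arguing that the forcing sets for $t$ agree with those witnessing $\val_*[p]$ (which requires extra justification), it is simpler to go the other way: take optimal strategies $(\sigma^*,\tau^*)$ of $G[p]$ and check they remain optimal in $G[t]$. Nothing changes at nodes of $\VM\cup\Vm$ since no arcs are added there, and at a control node $i$ every new arc $(i,j)\in t\setminus p$ leads, by hypothesis $(ii)$, to some $j$ with $v_j\ge v_i$, so $\m$ gains no profitable option. Hence $(\sigma^*,\tau^*)$ is optimal in $G[t]$ and $\val[t]=\val_*[p]$.

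For $(i)\Rightarrow(ii)$ your sketch is too vague to stand: the assertion that an inversion $(i,j)\in t$ with $v_i>v_j$ forces $\val[p](t)(i)$ to exceed $v_i$ is not justified, and indeed the strict loss need not appear at node $i$ itself. The clean argument (which you mention only as a fallback) is the one to lead with: if $t$ is not a nondecreasing ordering of the $v_i$, pick two \emph{consecutive} elements $i,j$ in $t$ with $v_i>v_j$; since the $v_i$ are nondecreasing along $p$ (Lemma~\ref{lem:mimj}), necessarily $(i,j)\in t\setminus p$, so swapping them yields $t'$ that still extends $p$. A switch argument parallel to the proof of Theorem~\ref{thm:pivot} then gives $\val[p](t')>\val[p](t)$, contradicting optimality of $t$ for $p$.
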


\begin{proof}[Proof of Lemma~\ref{lem:carac_optimal_order}]

  First, note that, by definition, optimality
  conditions are satisfied at control nodes in the definition
  of $\val[p](t)$, so it is always true that values $\val[p](t)(i)$
  are nondecreasing along $p$.

  Suppose now that $t$ is optimal for $p$, i.e.  $\val_*[p] = \val[p](t)$,
  and suppose that $t$ is not a nondecreasing ordering
  of the values  $\val_*[p](i)$ for $i \in [1,k]$.
  Then there must be two consecutive $i,j$ in order
  $t$ such that $\val_{*}[p](i) > \val_{*}[p](j)$,
  and we must have $(i,j) \in t \setminus p$.
  Consider the order $t'$ that we obtain
  from $t$ by inverting $j$ and $i$. Clearly,
  this order also extends $p$ since the only inversion
  between $t$ and $t'$ is $(i,j)$.
  By an argument similar to the proof
  of Theorem \ref{thm:pivot}, it is easy to obtain that
  $\val[p](t') > \val[p](t)$, which contradicts optimality.
  Hence we proved $(i) \Rightarrow (ii)$.

  Now suppose $(ii)$. Let $(\sigma^*,\tau^*)$ be an optimal strategy of
  $G[p]$; we show that optimality conditions
are met in $G[t]$. First note that $(\sigma^*,\tau^*)$ has the
  same values in $G[p]$ and $G[t]$. For the nodes in $\VM \cup \Vm$,
  no new arcs are added so the optimality conditions are still
  satisfied. Now, consider control nodes. An arc
  $(i,j)\in t \backslash p$ cannot lead to a
  lower value for $i$ by assumption $(ii)$. Hence the optimality
  conditions are still satisfied on control nodes, strategies
  $(\sigma^*,\tau^*)$ are optimal for $G[t]$, so finally
  $\val_*[p] = \val[t]$ and we proved $(ii) \Rightarrow (iii)$.

  Since $t$ involves more arcs than $p$ between the control \m-nodes, we have $\val[t] \leq \val[p](t) \leq \val_*[p]$.
  Assume $(iii)$, then $\val_*[p] = \val[p](t)$. Hence we proved $(iii) \Rightarrow (i)$.
\end{proof}

\subsection{Recursive formulation}

We now give Algorithm~\ref{alg:recursive_random_ludwig}, a recursive formulation of Algorithm~\ref{alg:iterative_random_ludwig}.
We will prove that these two algorithms compute exactly the same sequence of
total orders, and use the recursive formulation to derive a bound.

\begin{algorithm}
\DontPrintSemicolon
  \SetKwInOut{Input}{input}\SetKwInOut{Output}{output}
  \Input{$G$ SSG, order $\Theta$ on all pairs $\{i,j\}$ with $i\neq j \in
    [1,k]$, initial total order $t_0$ in $\total(k)$ extending a pretotal order $p_0 \in \pretotal(k)$.}
  \Output{optimal total order of $G[p_0]$}
    \BlankLine
    \lIf{$p_0$ is total (i.e. $p_0 = t_0$)}{\Return{$t_0$}}
    
    \Else{
      $\cdot$ select according to $\Theta$ the last pair $\{i,j\}$ s.t. $(i,j) \in t_0\setminus p_0$ \;
      $\cdot$ let $p_1 = p_0 + (i,j)$ and $p_2 = p_0
      +(j,i) $\;
      $\cdot$ recursively solve $G[p_1]$ with initial order $t_0$,
      giving optimal total order $t_1$ for $G[p_1]$ \;
      \lIf{$t_1$ is optimal for $G[p_0]$ (apply criterium in Lemma
        \ref{lem:carac_optimal_order})}{\Return $t_1$}
      
      \Else{
        $\cdot$ let $t_2$ be the total order obtained by 
        pivoting in $t_1$ along $i$\;
        $\cdot$ recursively solve $G[p_2]$ with initial order $t_2$, giving
        optimal order $t^*$ for $G[p_1]$\;
        \Return $t^*$.\
      }
    }
  \caption{Recursive version for Bland's rule on random nodes} \label{alg:recursive_random_ludwig}
\end{algorithm} 
%


\begin{definition}
  Let $p_0 \in \pretotal(k)$ and $\{i,j\} \notin p_0$ such that $p_1 =  p_0 + (i,j)$ is still a pretotal order.
  We say that the addition of $(i,j)$ to $p_0$ is {\it constraining},
  or that $(i,j)$ is {\it constrained},
  if $\val_*[ p_1 ] < \val_*[p_0]$.
\end{definition}

When an arc is constrained, it is essential to the
$\m$-optimal strategy in $G[p_1]$; in other words removing
this arc would increase optimal values.

\begin{lemma}
  \label{lem:property_optimal_order}
  Suppose $G$ is in CF and let $p\in \pretotal(k)$, $t \in \total(k)$, $p \subset
  t$. Then the following conditions are equivalent:
  \begin{enumerate}[(i)]
  \item $t$ is an optimal ordering for $p$;
  \item the addition of every arc $(i,j) \in t \setminus p$ to $p$  is not constraining;
  \end{enumerate}
\end{lemma}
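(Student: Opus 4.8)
The plan is to deduce the statement from the characterisation of optimal orders already obtained in Lemma~\ref{lem:carac_optimal_order} (an order $t$ extending $p$ is optimal for $p$ if and only if it lists the control nodes in nondecreasing order of their optimal values $\val_*[p](i)$), together with one elementary monotonicity fact: if $p \subseteq p'$ are pretotal orders on $[1,k]$, then $\val_*[p'] \le \val_*[p]$. The latter holds because control nodes are $\m$-nodes, so $G[p']$ is obtained from $G[p]$ only by adding out-arcs at $\m$-nodes while the set of $\M$-strategies is unchanged; for any $\m$-strategy $\tau$ of $G[p]$ the play distributions, hence the values $\val_{\sigma,\tau}$, coincide in $G[p]$ and $G[p']$, so minimising over the larger family of $\m$-strategies available in $G[p']$ can only decrease values. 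This is exactly the inequality already invoked in the proof of Lemma~\ref{lem:carac_optimal_order}.

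For the implication $(i) \Rightarrow (ii)$: suppose $t$ is optimal for $p$; by Lemma~\ref{lem:carac_optimal_order} this gives $\val_*[p] = \val[t] = \val_*[t]$. Let $(i,j) \in t \setminus p$. Then $p \subseteq p+(i,j) \subseteq t$ (the second inclusion holds since $p \subseteq t$ and $(i,j)\in t$, and $p+(i,j)$ is antisymmetric as a subset of the total order $t$), so monotonicity yields $\val_*[t] \le \val_*[p+(i,j)] \le \val_*[p]$. Since the two ends are equal, $\val_*[p+(i,j)] = \val_*[p]$, i.e. the addition of $(i,j)$ to $p$ is not constraining.

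For the implication $(ii) \Rightarrow (i)$: suppose the addition of every arc of $t \setminus p$ to $p$ is non-constraining. Fix $(i,j) \in t \setminus p$. Then $p+(i,j)$ is a pretotal order (again because $p+(i,j) \subseteq t$) with $\val_*[p+(i,j)] = \val_*[p]$, and since $(i,j) \in p+(i,j)$, Lemma~\ref{lem:mimj} applied to $p+(i,j)$ gives $\val_*[p](i) = \val_*[p+(i,j)](i) \le \val_*[p+(i,j)](j) = \val_*[p](j)$. Lemma~\ref{lem:mimj} applied to $p$ itself handles the arcs of $p$, so $\val_*[p](i) \le \val_*[p](j)$ for every $(i,j) \in t$; that is, $t$ is a nondecreasing ordering of $(\val_*[p](i))_{i\in[1,k]}$, hence optimal for $p$ by Lemma~\ref{lem:carac_optimal_order}.

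There is essentially no real obstacle here: the only points needing a line of care are stating the monotonicity fact precisely and checking that $p+(i,j)$ is a genuine pretotal order so that Lemma~\ref{lem:mimj} may be applied to it — both of which are immediate from $p+(i,j) \subseteq t$.
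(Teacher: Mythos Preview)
Your proof is correct. The $(i)\Rightarrow(ii)$ direction is the same sandwich argument as in the paper. For $(ii)\Rightarrow(i)$ you take a different and slightly more direct route: the paper adds the arcs of $t\setminus p$ one by one, arguing that the first arc that is constraining for an intermediate $p_\ell$ must already be constraining for $p$; you instead stay at $p$ throughout, applying Lemma~\ref{lem:mimj} to the single extension $p+(i,j)$ for each arc separately and reading off $\val_*[p](i)\le\val_*[p](j)$ from the equality $\val_*[p+(i,j)]=\val_*[p]$. Your argument is a little cleaner in that it avoids the sequential construction and the extra step (left implicit in the paper) that $\val_*[p](j)<\val_*[p](i)$ forces $(i,j)$ to be constraining for $p$; on the other hand, the paper's inductive form makes more transparent the stronger fact that $\val_*[p]=\val_*[p']$ for \emph{every} intermediate $p\subset p'\subset t$, which is what is really used later in the analysis of Algorithm~\ref{alg:recursive_random_ludwig}.
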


\begin{proof}[Proof of Lemma~\ref{lem:property_optimal_order}]
  Let $p_1 =  p_0 + (i,j )$.   
  Since $p \subset p_1 \subset t$, we have
  $\val[t]\leq \val_*[p_1] \leq\val_*[p]$.

  Assume that $t$ is an optimal order for $p$. If the addition of $(i,j)$
  to $p$ is constraining, then $\val_*[t]\leq \val_*[p_1] < \val_*[p]$
  which contradicts $\val[t] = \val_*[p]$ by Lemma
  \ref{lem:carac_optimal_order}. Hence we proved
  $(i) \Rightarrow (ii)$.

  Assume that no arc in $t \backslash p$ is constraining. Then, add
  sequentially arcs in $t \backslash p$ to $G[p]$ until we get $G[t]$,
  hence forming a sequence of games
  $G[p] = G[p_0], G[p_1], \dots, G[t]$. If none of these arcs is used
  in $G[t]$ then $\val[t] = \val_*[p]$ and $(i)$ is proved. Otherwise
  consider the first arc $(i,j)$ such that
  $\val_*[p_\ell+(i,j)] < \val_*[p_\ell]$.  This implies that $\val_*[p_{\ell}](j) <
  \val_*[p_{\ell}](i)$.
  But $\val_*[p] = \val_*[p_{\ell}]$ since no constraining arc has
  been added until step $\ell$. Hence $\val_*[p](j) < \val_*[p](i)$,
  and finally $(i,j)$ is constraining for $p$, a contradiction.
\end{proof}

Consider a run of the recursive algorithm and let $t$ be a total order
at any step of the run. Let us inspect the first time where $t$ is
modified. Order $t$ will be optimal for a sequence of pretotal orders
that are obtained from $t$ by removing one by one pairs in order
$\Theta$ as long as they are not constrained. This in fact amounts to
ascending the recursive call tree. Let $(i,j)$ be the first
constrained pair and $p_0$ the pretotal order obtained once $(i,j)$ is
removed. Then $t$ is turned into $t'$ by pivoting on node $i$ as we
did in iterative Alg.~\ref{alg:iterative_random_ludwig}.  We show now
that control node $i$ is same as the pivot selected by the pivot
selection rule.

Note first that, during the process of removing the pairs one by one
in order $\Theta$, the value intervals of $G[t]$ are kept unchanged until the
pretotal order $p_0+(i,j)$ is reached. Since removing $(i,j)$ implies
an increase of the optimal values, it means that $i$ and $j$ were in
the same value interval and that $i$ had no other neighbour in that
interval. Note here that, as a consequence, $p_0+(j,i)$ is then
guaranteed to be a pretotal order. Clearly, node $i$ is the first
control node in that situation. So the choice of node $i$ exactly
obeys the pivot selection rule.

Finally, the following lemma enables us to analyze the complexity of Algorithm~\ref{alg:recursive_random_ludwig}.

\begin{lemma} \label{lem:fucktrump}
  Let $p_0 \in \pretotal(k)$ and $\{i,j\} \notin p_0$ such that $p_1 =  p_0 + (i,j) $ and   $p_2 =  p_0 + (j,i)$  are pretotal orders and where the addition of $(i,j)$ to $p_0$ is constraining.
  Let $t_1$ be an optimal total order for $p_1$, $t_2$
  obtained from $t_1$ by pivoting in $i$,
  and let $t^*_2$ be an optimal total order for $p_2$.
  
  Let $(i_1,j_1)$ such that
  $ \val_*[ p_0 + (i,j) ] \not\leq
  \val_*[ p_0 + (i_1,j_1)]. $
  Then for any total order $t$ obtained by Algorithm \ref{alg:recursive_random_ludwig}   between
  $t_1$ and $t_2^*$ (including those), one has $(j_1,i_1) \in t$.
\end{lemma}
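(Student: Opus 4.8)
The statement is the analogue, in the ``order dichotomy'' setting, of Lemma~\ref{ludwig-lemme1} for Ludwig's algorithm: it says that once we have committed to $(i,j)$ as a constrained pair and we are inside the second recursive call (solving $G[p_2]$ starting from $t_2$), none of the ``bad'' arcs $(i_1,j_1)$ — those for which adding $(i_1,j_1)$ to $p_0$ does not force the optimal values down to (at most) the level forced by adding $(i,j)$ — will ever be reinstated as $(i_1,j_1)$; equivalently, the reverse arc $(j_1,i_1)$ stays in every total order $t$ the algorithm visits between $t_1$ and $t_2^*$. The plan is to copy the structure of the proof of Lemma~\ref{ludwig-lemme1}, monotone-replacing ``strategy'' by ``optimal total order / value vector $\val_*[\cdot]$'' and ``$\opt(\sigma,F)$'' by ``$\val_*[p]$''. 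Concretely I would argue by contradiction: suppose some $t$ visited between $t_1$ and $t_2^*$ contains $(i_1,j_1)$. Then $p_0+(i_1,j_1)\subset t$, so by monotonicity of optimal values under adding control-node arcs (Lemma~\ref{lem:carac_optimal_order}, $(iii)$, together with the containment inequality $\val[t]\le\val_*[p]$ whenever $p\subset t$) we get
\[ \val[t] \ \le\ \val_*[p_0+(i_1,j_1)]. \]

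Next I would get a lower bound on $\val[t]$ from the fact that $t$ is visited \emph{after} $t_2$ in the run of Algorithm~\ref{alg:recursive_random_ludwig}. Every order produced by the algorithm after $t_2$ is obtained from $t_2$ by a sequence of pivots on constrained control nodes, and by Theorem~\ref{thm:pivot} each pivot strictly increases the value vector; moreover $t_2$ itself is obtained from $t_1$ by one pivot on $i$, which by Theorem~\ref{thm:pivot} strictly increases the value vector, so $\val[t_2] > \val[t_1] = \val_*[p_1] = \val_*[p_0+(i,j)]$ (using optimality of $t_1$ for $p_1$ and Lemma~\ref{lem:carac_optimal_order}). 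Hence for every $t$ strictly between $t_2$ and $t_2^*$ we have $\val[t] \ge \val[t_2] > \val_*[p_0+(i,j)]$, and the same holds for $t=t_2$ and $t=t_2^*$ (the latter since $\val[t_2^*] \ge \val[t_2]$ by the pivot monotonicity). Combining the two bounds gives $\val_*[p_0+(i,j)] < \val[t] \le \val_*[p_0+(i_1,j_1)]$, i.e.\ $\val_*[p_0+(i,j)] < \val_*[p_0+(i_1,j_1)]$ on at least one coordinate and $\le$ everywhere; in particular $\val_*[p_0+(i,j)] \le \val_*[p_0+(i_1,j_1)]$ coordinatewise, which directly contradicts the hypothesis $\val_*[p_0+(i,j)] \not\le \val_*[p_0+(i_1,j_1)]$.

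The one genuine subtlety — the step I expect to be the main obstacle — is the claim that every total order visited by Algorithm~\ref{alg:recursive_random_ludwig} strictly \emph{after} $t_2$ and up to (and including) $t_2^*$ has value vector $\ge \val[t_2]$. In the iterative picture this is clear because the iterative algorithm only ever pivots and pivots are monotone (Theorem~\ref{thm:pivot}); but here I am reasoning about the \emph{recursive} algorithm, so I must be careful that the recursive call solving $G[p_2]$, started at $t_2$, really does produce a nondecreasing sequence of value vectors ending at $t_2^*$. This follows from the established equivalence between Algorithm~\ref{alg:iterative_random_ludwig} and Algorithm~\ref{alg:recursive_random_ludwig} (they compute the same sequence of total orders) together with Theorem~\ref{thm:pivot}, but I would spell out that the orders ``between $t_1$ and $t_2^*$'' in the lemma's statement are exactly: $t_1$, then possibly intermediate orders in the first recursive call's tail that coincide with $t_1$, then $t_2$, then the orders of the second recursive call, ending at $t_2^*$; for all of these either the value vector equals $\val[t_1]$ (only $t_1$ itself, and those equal to it) or it is $> \val[t_1]$, and in the latter case it is in fact $\ge \val[t_2] > \val_*[p_0+(i,j)]$. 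Handling $t = t_1$ separately is harmless: for $t_1$ we use optimality of $t_1$ for $p_1 = p_0+(i,j)$, so $\val[t_1] = \val_*[p_0+(i,j)]$, and then $(i_1,j_1)\in t_1$ together with $\val[t_1] \le \val_*[p_0+(i_1,j_1)]$ would again give $\val_*[p_0+(i,j)] \le \val_*[p_0+(i_1,j_1)]$, contradicting the hypothesis. So in every case the assumption $(i_1,j_1)\in t$ is impossible, and therefore $(j_1,i_1)\in t$ (the pair $\{i_1,j_1\}$ is ordered in every total order), which is what we wanted.
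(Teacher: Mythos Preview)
Your argument is correct and follows essentially the same route as the paper: assume $(i_1,j_1)\in t$, deduce $\val[t]\le\val_*[p_0+(i_1,j_1)]$ from the containment $p_0+(i_1,j_1)\subset t$, and combine this with the lower bound $\val_*[p_0+(i,j)]=\val[t_1]<\val[t_2]\le\val[t]$ coming from Theorem~\ref{thm:pivot} and Lemma~\ref{lem:carac_optimal_order} to contradict the hypothesis. Your treatment is in fact slightly more careful than the paper's, which tacitly assumes $t$ is at or after $t_2$; you handle the boundary case $t=t_1$ explicitly, where the strict inequality is not available but the weak inequality $\val_*[p_0+(i,j)]=\val[t_1]\le\val_*[p_0+(i_1,j_1)]$ already yields the same contradiction.
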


\begin{proof}
  Suppose that $(i_1,j_1) \in t$. Then $t \supset   p_0 + (i_1,j_1) $
  hence  $\val[t] \leq \val_*[ p_0 + (i_1,j_1) ] $.

  On the other hand, since the pivot operation is increasing values, we
  have $\val_*[p_0 + (i,j)] = \val[t_1] < \val[t_2] \leq \val[t]$,
  so $\val_*[p_0+(i,j)] \leq \val_*[p_0+(i_1,j_1)]$, a contradiction.
\end{proof}

Using this result, the proof for the complexity bound is the same as the proof of
Theorem \ref{thm:ludwigglobal} using the recursive formulation.
Let $f^\Theta(G,p_0,t_0)$ be the total number of pivots performed by Algorithm
\ref{alg:recursive_random_ludwig} on input $G, p_0, t_0$ for an order
$\Theta$ on pairs. 

Now define $\Phi(m) = \sup_{G,p_0,t_0} \espsb{\Theta}{}{ f^\Theta(G,p_0, t_0})$
where the supremum is taken over all games $G$, pretotal orders $p_0$
and total orders $t_0$ extending $p_0$ such that $t_0 \setminus p_0$
is of size at most $m$. The expectation is taken over all possible
uniform choices for $\Theta$.

Then by Lemma \ref{lem:fucktrump}, $\Phi(m)$ will satisfy Lemma \ref{lem:ineq1}, hence the claimed bound of Th.~\ref{thm:mainthm} by Lemma \ref{lem:borneF} since the depth of the recursive tree is at most $\frac{k(k-1)}{2}$.







\bibliography{ludwig.bib}

\end{document}